\documentclass[11pt]{article}
\usepackage[numbers]{natbib}
\usepackage{fullpage}

\usepackage[auth-sc-lg]{authblk}


\usepackage{bm}
\usepackage{soul}
\usepackage{xspace}
\usepackage[f]{esvect}
\usepackage{amsthm}
\usepackage{amsmath}
\usepackage{amssymb}
\usepackage{mathtools}
\usepackage{mathrsfs}
\usepackage[capitalise]{cleveref}
\usepackage[shortlabels]{enumitem}
\usepackage{algorithm}
\usepackage[noend]{algpseudocode}
\usepackage{csquotes}
\usepackage{needspace}
\usepackage{xfrac}



\DeclarePairedDelimiter{\ceil}{\lceil}{\rceil}
\DeclarePairedDelimiter{\floor}{\lfloor}{\rfloor}

\providecommand{\set}[1]{\ensuremath{\{ #1 \}}\xspace}

\providecommand{\card}[1]{\ensuremath{| #1 |}\xspace}
\providecommand{\setbuild}[2]{\ensuremath{\set{#1 \mid #2}}\xspace}

\newcommand{\comp}[1]{\overline{#1}}

\newcommand{\hypen}{\text{-}}
\renewcommand{\th}{^\text{th}}
\newcommand{\Kth}[1]{\ensuremath{#1^{\text{th}}}}


\newcommand{\V}{V}
\newcommand{\C}{C}

\newcommand{\alg}{f}
\newcommand{\obj}{*}
\newcommand{\per}{\nu}

\newcommand{\election}{$\elec = (\V, \C, \prof)$\xspace}

\newcommand{\elec}{\mathcal{E}}
\newcommand{\prof}{{\vv{\succ}}}
\renewcommand{\succeq}{\succcurlyeq}
\renewcommand{\top}{\mathsf{top}}
\renewcommand{\bot}{\mathsf{bot}}
\newcommand{\cons}{\sim}

\newcommand{\apv}[1]{#1\hypen\mathsf{apv}}
\newcommand{\kapv}{\apv{k}}
\newcommand{\plu}{\mathsf{plu}}

\newcommand{\score}{\mathsf{score}}
\newcommand{\cost}{\mathsf{cost}}
\newcommand{\dist}{\mathsf{dist}}

\newcommand{\pvec}{\ensuremath{\bm{p}}\xspace}
\newcommand{\p}[1]{p(#1)}
\newcommand{\qvec}{\ensuremath{\bm{q}}\xspace}
\newcommand{\q}[1]{q(#1)}
\newcommand{\pq}{\ensuremath{{(\pvec, \qvec)}}\xspace}

\newcommand{\domg}[1]{\ensuremath{G_{#1}}\xspace} 
\newcommand{\apvveto}[1]{\ensuremath{#1\hypen\textsc{ApprovalVeto}}\xspace}
\newcommand{\kapvveto}{\apvveto{k}}

\newcommand{\pluveto}{\textsc{PluralityVeto}\xspace}
\newcommand{\plumatching}{\textsc{PluralityMatching}\xspace}

\newcommand{\votebyveto}{\textsc{VoteByVeto}\xspace}

\newcommand{\AVC}{\mathbb{AVC}}
\newcommand{\avc}[2][]{\ifthenelse{\equal{#1}{}}{\AVC_{#2}}{\AVC_{#2}(#1)}}

\newcommand{\DMC}{\mathbb{MIN}}
\newcommand{\dmc}[2][]{\ifthenelse{\equal{#1}{}}{\DMC_{#2}}{\DMC_{#2}(#1)}}


\theoremstyle{plain}
\newtheorem{theorem}{Theorem}

\newtheorem{lemma}{Lemma}
\newtheorem{corollary}{Corollary}

\theoremstyle{definition}
\newtheorem{definition}{Definition}

\newenvironment{rtheorem}[3][]{
\bigskip
\noindent \ifthenelse{\equal{#1}{}}{\bf #2 #3}{\bf #2 #3 (#1)}
\begin{it}}{\end{it}}


\title{\kapvveto: A Spectrum of Voting Rules \\ Balancing Metric Distortion and Minority Protection}

\author[ ]{Fatih Erdem Kizilkaya\thanks{Contact Author}}
\author[ ]{ David Kempe}

\affil[ ]{\textit{University of Southern California}} 
\affil[ ]{\normalsize \tt \{FatihErdemKizilkaya, David.M.Kempe\}@gmail.com}

\date{}

\begin{document}

\maketitle

\begin{abstract}
In the context of single-winner ranked-choice elections between $m$ candidates, 
we explore the tradeoff between two competing goals in every democratic system:
the \emph{majority principle} (maximizing the social welfare) and 
the \emph{minority principle} (safeguarding minority groups from overly bad outcomes).
To measure the social welfare, we use the well-established framework of \emph{metric distortion} subject to various objectives: \emph{utilitarian} (i.e., total cost), \emph{$\alpha$-percentile} (e.g., median cost for $\alpha = 1/2$), and \emph{egalitarian} (i.e., max cost).
To measure the protection of minorities, we introduce the \emph{$\ell$-mutual minority criterion},
which requires that if a sufficiently large (parametrized by $\ell$) coalition $T$ of voters ranks all candidates in $S$ lower than all other candidates, then none of the candidates in $S$ should win.
The parameter $\ell$ allows the criterion to interpolate between the minimal requirement that the winner must not be ranked \emph{last} by a~strict majority (when $\ell=2$) and
the strongest protection from bottom choices (when $\ell = m$).
The~highest $\ell$ for which the criterion is satisfied provides a well-defined measure of \emph{mutual minority protection} (ranging from 1 to $m$).

Our main contribution is the analysis of a recently proposed class of voting rules called \kapvveto,
offering a~comprehensive range of trade-offs between the two principles. 
This class spans between \pluveto (for $k=1$) --- a simple voting rule achieving optimal metric distortion --- and \votebyveto (for $k=m$) which picks a~candidate from the proportional veto core.
We show that \kapvveto has minority protection at least $k$, 
and thus, it~accommodates any desired level of minority protection.
However, this comes at the price of lower social welfare.
For the utilitarian objective, the metric distortion becomes ${2 \cdot \min (k+1, m)-1}$, i.e., increases linearly in $k$.
For the $\alpha$-percentile objective, the~metric distortion is the optimal value of~5 for $\alpha \ge k/(k+1)$ and unbounded for $\alpha < k/(k+1)$, i.e., the~range of $\alpha$ for which the rule achieves optimal distortion becomes smaller.
For the egalitarian objective, the metric distortion is the~optimal value of 3 for all values of $k$. 

\end{abstract}

\begin{displayquote}

\emph{All, too, will bear in mind this sacred principle, that though the~will of the~majority is in all cases to prevail, that will to be rightful must be reasonable; that the~minority possess their equal rights, which equal law must protect, and to violate would be oppression.}

\hfill Thomas Jefferson (in his First Inaugural Address)
\end{displayquote}

\begin{displayquote}
\emph{If a majority be united by a common interest, the rights of the minority will be insecure.}

\hfill James Madison (Federalist No. 51)
\end{displayquote}

\section{Introduction}
\label{sec:intro}

The foundation of democracy is the \emph{majority principle} --- the ideal that a decision should be based on the opinion of the majority, ensuring that the outcome is as good as possible for as many individuals as possible.
However, following this principle in and of itself poses an immediate risk, commonly known as the \emph{majority tyranny}, in which the majority pursues exclusively its own objectives at the expense of the interests of the minority factions.
Thus, constitutional democracies also incorporate the \emph{minority principle} --- the ideal that the authority of the majority should be limited to protect individuals or groups from overly bad outcomes. 
One of the main challenges in building and maintaining a democracy is to find the right balance between the contradictory factors of the majority and minority principles, 
as pointed out by the third and fourth presidents of the United States at the opening.

As a case study, consider \emph{single‐choice elections} in which each of the $n$ voters selects exactly one of the $m$ candidates on the ballot. 
In this setting, there are only two natural voting rules: 
the \emph{plurality rule}, where each voter votes for their most preferred candidate and the candidate with the most votes wins;
and the \emph{veto rule}, where each voter votes against their least preferred candidate and the candidate with the fewest (veto) votes wins. 
The plurality rule satisfies the~\emph{majority criterion} --- if a candidate $c$ is the best outcome for a majority of voters, the plurality~rule will always elect~$c$. 
Conversely, the veto rule satisfies the \emph{minority criterion} --- if a candidate $c$ is the~worst outcome for more than $n/m$ voters, the veto rule will never elect~$c$.\footnote{This is the best possible guarantee, as voters might be divided into $n/m$ groups with distinct worst outcomes.}
Thus, there is a sharp trade-off between the majority and minority principles when each voter submits a vote comprising a single candidate. 

By contrast, we~show that \emph{ranked-choice elections} --- in which voters submit a complete ranking over all candidates --- permit a spectrum of trade‐offs between the majority and minority principles.
To this end, we~present an in-depth analysis of \kapvveto \cite{GeneralizedVetoCore} --- a class of voting rules parameterized by an integer $1 \le k \le m$ which determines the trade-off between the two principles.
\kapvveto endows each voter with $k$ approval votes and $k$ veto votes, which are processed as follows:

\begin{description}[leftmargin=17pt, itemsep=0pt, topsep=0pt]
	\item[Approval Votes] First, each voter approves (their most favorite) $k$ candidates. 
	As a result, each candidate $c$ starts with a~score equal to their \emph{$k$-approval score} --- the~number of voters who have $c$ among their top $k$ choices.
	\item[Veto Votes] Then, the $nk$ veto votes are processed one by one in an~arbitrary order.  
	A veto vote of voter $v$ starts from $v$'s bottom choice (i.e., the candidate ranked lowest by $v$) among candidates that are not eliminated yet, and eliminates those whose score is 0. 	
	When the~veto vote arrives at a~candidate $c$ with positive score, it~decrements $c$'s score by 1 and terminates, that is, $c$ is not eliminated at this point even if the score is now 0.
\end{description}

\kapvveto declares candidates who are not eliminated until the~end as (tied) winners.
The~exact choice of winner(s) might differ based on the~order in which veto votes are processed.\footnote{As we briefly discuss in \cref{sec:conclusion}, one can also process the veto votes simultaneously to avoid the arbitrariness due to the choice of veto order.}
We~study the~set of all candidates who can emerge as winners for some veto order. 
We~refer to these winners as the \emph{$k$-approval veto core} due to a characterization by \citet{GeneralizedVetoCore} (given in \cref{sec:prelims}) reminiscent of definitions of ``core'' throughout social choice and game theory.
Indeed, the possible winners of $\apvveto{m}$ are exactly those in the proportional veto core defined by \citet{moulin:proportional-veto-principle}. We will discuss this core in detail below.

We are interested in how \kapvveto trades off between the majority and minority principles as the parameter $k$ is varied.
As primary motivation, we consider guarantees that can be obtained at the extreme cases $k=m$ and $k=1$.

\begin{itemize}[leftmargin=17pt, itemsep=0pt, topsep=0pt]
	\item The $m$-approval veto core is the same as the \emph{proportional veto core} (defined in \cref{sec:prelims}), which consists of all candidates that are not ``disproportionately bad'' for any coalition of voters large enough to be considered a minority group.
	Specifically, if an $\alpha$-fraction of voters prefer a $1-\alpha$ fraction of candidates over candidate~$c$, then $c$ is excluded from the proportional veto core.
	Any voting rule picking a candidate from the proportional veto core satisfies the minority criterion (discussed above in the single-choice elections setting), and thus provides the~best possible guarantee for protection of minorities from worst outcomes (i.e., bottom choices).
	
	\item At the other extreme, $\apvveto{1}$ equals \pluveto\footnote{To be precise, $\apvveto{1}$ is equivalent to an extension of \pluveto allowing for tied outcomes, introduced in \cite{GeneralizedVetoCore}; the only difference is that the original rule immediately eliminates all candidates whose score reaches 0, and thus, it insists on picking a single winner by declaring the last eliminated candidate as the winner, even for elections with obvious ties.} \cite{PluralityVeto} 
	which emerged from a~line of research towards designing a voting rule with optimal \emph{metric distortion} \citep{anshelevich:bhardwaj:elkind:postl:skowron, goel:krishnaswamy:munagala, munagala:wang:improved, gkatzelis:halpern:shah:resolving}.
	The key assumption in metric distortion is that voters and candidates are jointly embedded in an unknown metric space such that candidate $c$ is at least as close to voter $v$ as candidate $c'$ whenever $v$ prefers $c$ to~$c'$.\footnote{The motivation is that the distance between a voter and a candidate represents how much their opinions/positions on key issues differ, or simply, the \emph{cost} incurred by the voter if the candidate is elected. This generalizes the~classical notion of \emph{single-peaked} preferences \citep{black:rationale,downs:democracy,moulin:single-peak}, which considers embeddings specifically on a line instead of a~general metric space.}
	Among all such metric spaces, the worst-case ratio between the total distance of voters to a~candidate~$c$ and their total distance to an optimal candidate is referred to as the~metric distortion of~$c$ (defined formally in \cref{sec:distortion}). 
   \pluveto always returns a candidate with metric distortion at most~3, which is the~best possible guarantee.
   Given that the objective is minimizing the total distance (i.e., cost) of all voters to the winner, \pluveto is prone to majority tyranny, and thus represents an extreme case of ignoring the minority principle.
\end{itemize}

These two extreme cases achieve the best possible guarantees, respectively, for minorities and the majority, albeit under two very different frameworks. 
The former follows an \emph{axiomatic} approach while the latter follows a \emph{welfarist} approach.
In~order to understand the tradeoff as $k$ is varied, we~consider generalizations of both types of guarantees as follows.

\begin{itemize}[leftmargin=17pt, itemsep=0pt, topsep=0pt]
	\item We generalize the minority criterion by not only considering bottom choices but the full rankings.
	Our definition is inspired by the~classical notion of \emph{solid coalitions} \citep{Dummett1984}. 
	We say that a coalition $T$ of voters \emph{solidly vetoes} a subset $S$ of candidates if every voter in $T$ ranks all candidates in $S$ below all other candidates.
	We~define the \emph{mutual minority criterion} as follows: 
	if a~subset $S$ of candidates is solidly vetoed by a coalition $T$ whose size is proportionally larger than $S$ (i.e., $\card{T}/n > \card{S}/m$), then no candidate in $S$ should win (because otherwise the minority group $T$ would face a~disproportionately bad outcome).
	We introduce a parameter $\ell$ to capture varying levels of minority protection.
	Specifically, we require that $\card{T}/n > \card{S}/\ell$ (so~that $T$ needs to be larger to veto $S$ for smaller $\ell$).
	For $\ell=1$, every voting rule vacuously satisfies this criterion.
	For $\ell=2$, it coincides with the well-known \emph{majority loser criterion}, which only imposes the~minimal requirement that the winner must not be ranked last by a~strict~majority.
	Given a voting rule $f$ (or a winning candidate $c$), we~refer to the largest $\ell$ for which the criterion is satisfied as the~\emph{mutual minority protection} of $f$ (or $c$).
	(See \cref{sec:minority} for the formal definitions.)
	
	\item \pluveto, or any other rule with optimal metric distortion, is prone to majority tyranny due to the~choice of the~\emph{social cost function}.\footnote{That is, the total distance of voters to a given candidate, which is broadly referred to as the \emph{utilitarian social~cost} function in economics.}
	The framework can be adapted to use other social cost functions;
	indeed, \citet{anshelevich:bhardwaj:elkind:postl:skowron} also considered the \emph{median social cost} (i.e., the median of voters' distances to the~candidate) for this very reason ---  
	focusing on the median voter reduces the impact of outliers, i.e., voters with very high or very low costs.
	More generally, \citet{anshelevich:bhardwaj:elkind:postl:skowron} consider the \emph{$\alpha$-percentile social cost}: for~any~given $\alpha \in [0, 1)$, the~distance of the~\Kth{\floor{\alpha \cdot n + 1}} closest voter to the~candidate, which
	captures the implicit goal of protecting minorities comprising a $1-\alpha$ fraction of the population.
	The most protection to minorities --- namely, every single individual --- is provided by the~\emph{egalitarian social cost} \citep{improved_metric_distortion}, i.e., the~maximum distance of all voters to the candidate.   
	(See \cref{sec:distortion} for the formal definitions.)
\end{itemize}

\paragraph{Our Contributions}

Our main contribution is an analysis of \kapvveto following both the~axiomatic and welfarist approaches, respectively, via the mutual minority protection notion and the metric distortion framework.    

The main result of our \emph{axiomatic} analysis (presented in \cref{sec:minority}) is that every candidate in the~$k$-approval veto core (i.e., every possible winner of \kapvveto) has mutual minority protection at least $k$.
This confirms that as $k$ increases, the protection of minority groups under \kapvveto increases gradually.

We complement the axiomatic analysis with a welfarist analysis (presented in \cref{sec:distortion}) using the metric distortion framework under various objectives (social cost functions) discussed earlier.
Our main results here can be summarized as follows:

\begin{theorem} \label{thm:intro-summary-distortion}
For every $k \in \set{1, \ldots, m}$, the following results hold:
\medskip
\begin{enumerate}[1.,noitemsep,topsep=0pt,leftmargin=0.55cm]
	\item Every candidate in the $k$-approval veto core has metric distortion at most $2 \min(k+1, m) - 1$ (with respect to the utilitarian social cost, i.e., the sum of distances), and this bound is tight.
	\item For every $\alpha \ge \frac{k}{k+1}$, every candidate in the $k$-approval veto core has $\alpha$-percentile metric distortion at most 5, and this bound is tight.
	\item For every $\alpha < \frac{k}{k+1}$, there exist instances in which a candidate in the $k$-approval veto core has unbounded $\alpha$-percentile metric distortion.
	\item Every candidate in the $k$-approval veto core has egalitarian metric distortion at most 3, and this~bound is tight.
\end{enumerate}  
\end{theorem}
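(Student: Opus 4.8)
The plan is to route every bound through the combinatorial characterization of the $k$-approval veto core recalled in \cref{sec:prelims}. For a core member $c$ this characterization can be read as supplying a fractional ``domination'' assignment in $\domg{c}$: nonnegative weights pairing each of voter $v$'s $k$ veto tokens with an approval slot $(u,a)$, where $a$ lies among the top $k$ candidates of some voter $u$ and $v$ ranks $c$ at least as high as $a$ (so $c \succeq_v a$, hence $d(v,c) \le d(v,a)$ in any consistent metric). The common engine for all four upper bounds is then to sum, along this assignment, triangle inequalities of the form $d(v,c) \le d(v,a) \le d(v,o) + d(o,u) + d(u,a)$, where $o$ denotes the candidate minimizing the objective at hand; only the aggregation differs across objectives. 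The lower bounds I would obtain by explicit instances, in each case exhibiting a core member whose cost meets the stated ratio.

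For the utilitarian objective I would generalize the matching proof of distortion $3$ for \pluveto (the case $k=1$). Summing the displayed inequality over the assignment, $\sum_v d(v,o)$ contributes $\cost(o)$, and it remains to charge $\sum_v \big(d(o,u)+d(u,a)\big)$ against $\cost(o)$. The qualitative change for $k>1$ is twofold: the witness $a$ need only lie among $u$'s top $k$ (so $d(u,a)$ may be as large as $u$'s distance to its $k$-th favourite, not its favourite), and a single witness voter $u$ can be charged by up to $k$ distinct voters $v$. Accounting for these multiplicities yields the coefficient $2(k+1)$ and hence the bound $2(k+1)-1$, which I would cap at $2m-1$ since a domination chain can involve at most $m$ distinct candidates. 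For tightness I would construct an instance on $\min(k+1,m)$ candidates in which $c$ receives just enough top-$k$ approvals from one group of voters placed at a shared point equidistant from $c$ and $o$, with the remaining voters co-located at $o$; verifying $c$ is in the core via the characterization and computing costs then drives the ratio to $2\min(k+1,m)-1$.

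The $\alpha$-percentile statements (parts $2$ and $3$) are where I expect the main obstacle, since both the constant $5$ and the exact threshold $\tfrac{k}{k+1}$ must be matched. For the upper bound, write $r$ for the $\alpha$-percentile cost of $o$, so at least $\floor{\alpha n + 1}$ voters lie within $r$ of $o$. I would argue that when $\alpha \ge \tfrac{k}{k+1}$ this cluster is dense enough, relative to the domination assignment, to force at least $\floor{\alpha n + 1}$ voters to lie within $5r$ of $c$, the constant $5$ arising from a length-five triangle chain linking such a voter to $c$ through a witness that is itself within $r$ of $o$. The threshold is the delicate point: a voter within $r$ of $o$ need not rank $o$ in its top $k$, and the fraction of the cluster that can fail to transfer to $c$ is exactly $\tfrac{1}{k+1}$, so the requirement $1-\alpha \le \tfrac{1}{k+1}$ is precisely what makes the counting close. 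For part $3$ I would construct, for any $\alpha < \tfrac{k}{k+1}$, an instance in which slightly fewer than $(1-\alpha)n$ voters sit at $c$'s location while a majority exceeding $\alpha n$ are placed arbitrarily far from $c$ yet rank $c$ in a middle position (above a block of far-away dummy candidates, below $o$), so that no solidly-vetoed set traps $c$ and the characterization still certifies $c$ in the core; then the $\floor{\alpha n + 1}$-th closest voter to $c$ is arbitrarily far while $o$ stays close to everyone, giving unbounded distortion.

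Finally, the egalitarian bound of $3$ holds for every $k$ because the dependence on $k$ disappears. The key observation is that a core member $c$ can never be strictly dispreferred to $o$ by all voters: if every voter ranked $o$ above $c$, then while $c$ is alive $o$ is never any voter's bottom surviving candidate, so $o$ retains its approval score and outlives $c$ (and if $o$ had zero approval score, then so would the lower-ranked $c$, forcing its early elimination), contradicting $c$ being in the core. Hence some voter $u$ has $c \succeq_u o$, i.e.\ $d(u,c) \le d(u,o) \le R^\ast$ for $R^\ast$ the egalitarian cost of $o$; then $d(o,c) \le d(o,u)+d(u,c) \le 2R^\ast$, and for any voter $w$ we get $d(w,c) \le d(w,o)+d(o,c) \le 3R^\ast$. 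Tightness follows from the standard metric lower bound of $3$, realised by a core member in a two-point instance, independently of $k$.
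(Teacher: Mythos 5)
Your proposal breaks down at its most confident step, part 4. You claim that a core member $c$ can never be Pareto-dominated, arguing that if $o$ dominates $c$ then $o$ outlives $c$, and that $\kapv(o)=0$ would force $c$'s ``early elimination.'' That last inference is false: a score-0 candidate is eliminated only when it becomes the \emph{active voter's bottom choice among surviving candidates}, and it can simply never become that before the veto votes run out. Concretely, take $k=1$, candidates $\set{a,b,t_1,t_2}$, voter $v_1$ with $t_1 \succ a \succ b \succ t_2$ and voter $v_2$ with $t_2 \succ a \succ b \succ t_1$: the two vetoes land on $t_2$ and $t_1$, nothing is eliminated, and $b \in \avc{1}$ even though $a$ Pareto-dominates $b$ (equivalently, $\domg{1}(b)$ has the perfect matching $v_1$--$t_2$, $v_2$--$t_1$). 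The paper states this obstacle explicitly before \cref{lem:limited-Pareto-domination}: your argument only establishes that lemma ($\kapv$ of the \emph{dominator} must be $0$), not non-domination. The remaining case genuinely needs the extra step in \cref{thm:egal:kavc}: take a most distant voter from $w$, follow its matched candidate $c$ in $\domg{k}(w)$ to get $\cost^1_d(w) \le \cost^1_d(c)$, observe $\kapv(c)>0$ so $c^*$ cannot dominate $c$, and apply the non-domination bound to $c$. (Your triangle chain in the non-dominated case is correct, and in fact a bit slicker than the two-case proof of \cref{lem:non-domination}.)

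Parts 1 and 2 also have gaps. For part 1, your charging scheme is in the spirit of the flow proof in \cref{apx:flow}, but (i) $d(u,a)$ is \emph{not} bounded when $o$ sits in $\top_k(u)$ above the witness $a$; the paper must reroute exactly $\frac{k-1}{k}\kapv(c^*)$ units of flow through voters preferring $w$ to $c^*$, invoking the core property a second time ($\card{V^*} \ge \kapv(c^*)/k$, else $\comp{V^*}$ $k$-blocks $w$ with witness $\set{c^*}$); and (ii) your cap at $2m-1$ does not follow from ``chains involve at most $m$ candidates''---the charging yields $2m+1$ at $k=m$, and the improvement requires a separate argument: every member of $\avc{m}$ is preferred to $c^*$ by at least $n/m$ voters (else it is $m$-blocked with witness $\set{c^*}$), which feeds \cref{lem:anshelevich:util}. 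For part 2, your ``the cluster transfers except a $1/(k+1)$ fraction'' is a placeholder where the actual mechanism should be: the paper splits on $d(w,c^*)$ versus $\frac{4}{5}\cost^\alpha_d(w)$ and, in the hard case, derives a Hall's-theorem violation in $\domg{k}(w)$---if more than $nk/(k+1)$ voters lie within $d(w,c^*)/4$ of $c^*$, the half-distance lemma (\cref{lem:berger}) deletes all edges from those voters to the copies of their own top choices, leaving more than $nk/(k+1)$ candidate copies with strictly fewer potential partners. The constant $5$ arises as $\frac{5}{4}\cdot 4$, not from a length-five chain, so your accounting is unlikely to assemble as sketched. You also never address tightness of $5$: for $\alpha \in [\sfrac{2}{3},1)$ only a lower bound of $3$ was previously known, and closing this gap is one of the paper's contributions (\cref{thm:perc:deterministic}, proved via the cyclic-shift construction in \cref{apx:tight_lower_bound}); a generic promise of ``explicit instances'' does not cover it. Your part 3 construction, by contrast, essentially works (though note blocking in \cref{def:kapv-veto-core} is not about solidly vetoed sets): the only dangerous coalition is the far majority with witness $\set{o}$, and $\alpha < k/(k+1)$ is exactly what defeats it, provided the near voters keep $o$ out of their top $k$.
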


The first result above shows that as $k$ increases, the $k$-approval veto core sacrifices the welfare of the majority in return for higher minority protection, as captured by the axiomatic analysis.
However, the other results do not reflect this trade-off, even though the 
$\alpha$-percentile and egalitarian objectives appear to favor the minority principle.
Specifically, the~second and third results show that as $k$ increases,
the $k$-approval veto core focuses exclusively on increasingly egalitarian objectives
--- protecting minorities comprising at~most a $\frac{1}{k+1}$ fraction of the population --- 
while offering no approximation guarantees for larger groups.
To~match the upper bound of 5 for the case of $\alpha \geq \frac{k}{k+1}$, we~prove that for every $\alpha \in [1/2, 1)$, and \emph{every deterministic voting rule}, there~exist instances where the~$\alpha$-percentile distortion is arbitrarily close to~5.
This lower bound had been previously shown only for the range $\alpha \in [1/2,2/3)$; we thus close one of the~remaining gaps for metric distortion \cite{anshelevich:bhardwaj:elkind:postl:skowron}.
Lastly, the fourth result shows that one~can do better in the egalitarian setting, and the $k$-approval veto core achieves the optimal distortion of~3, for~all~$k$.  

We leave this discrepancy between our axiomatic and welfarist analyses as an open question for future work, and discuss it in more detail in \cref{sec:conclusion}.

\paragraph*{Related Work}
The notion of distortion was initially studied in the setting of normalized \emph{utilities},  i.e., each voter has non-negative utilities for candidates, adding up to 1 
\citep{BCHLPS:utilitarian:distortion,caragiannis:procaccia:voting,procaccia:approximation:gibbard,procaccia:rosenschein:distortion}.
However, without further assumptions on the~structure of the~utilities, the distortion of all voting rules can be very high \citep{caragiannisSubsetSelectionImplicit2017}.
One very natural and fruitful type of restriction was proposed by \citet{anshelevich:bhardwaj:postl,anshelevich:bhardwaj:elkind:postl:skowron}, who imposed a \emph{metric} structure. 
This metric structure is much more naturally understood when the utilities are negative, or equivalently, when we interpret the voters as having \emph{costs} for different candidates which are characterized by their distances.
The fact that distances must obey the triangle inequality restricts the structure of costs, and allows different voting rules to exhibit a rich range of different distortion values.
Among the~key results of \citet{anshelevich:bhardwaj:postl,anshelevich:bhardwaj:elkind:postl:skowron} was a (tight) bound of 5 on the metric distortion of the Copeland rule, both with respect to utilitarian and $\alpha$-percentile objectives, for any $\alpha \in [\sfrac{1}{2},1)$. 
These were (nearly) matched by a~lower bound of 3 for the utilitarian objective, and a lower bound of 5 for the $\alpha$-percentile objective 
for~${\alpha \in [\sfrac{1}{2}, \sfrac{2}{3})}$.\footnote{For $\alpha < 1/2$, the $\alpha$-percentile distortion of any deterministic voting rule can be easily seen to be unbounded.} 
For~larger $\alpha$, the established lower bound was~3.\footnote{For $\alpha \geq \frac{m-1}{m}$, \citet{anshelevich:bhardwaj:elkind:postl:skowron} also provided an upper bound of 3, which is achieved by plurality voting.}

The gap between the upper bound of 5 and the lower bound of 3 on the utilitarian metric distortion of deterministic voting rules inspired a significant thread of research work.
Initially, the ranked pairs rule was conjectured to achieve distortion~3, which was disproved by \citet{goel:krishnaswamy:munagala} who gave a lower bound of 5; the lower bound was strengthened to $\Omega(\sqrt{m})$ by \citet{DistortionDuality}.
The first improvement in the upper bound was due to \citet{munagala:wang:improved}, who achieved distortion ${2+\sqrt{5}} \approx 4.23$ using a novel asymmetric variant of the Copeland rule.
Building on the work of \citet{munagala:wang:improved} and \citet{DistortionDuality}, the gap was finally closed by \citet{gkatzelis:halpern:shah:resolving}, who showed that the voting rule \plumatching achieves distortion 3. 
The~voting rule is an exhaustive search for a candidate whose so-called ``domination graph'' (see \cref{sec:prelims}) has a~perfect matching.
Subsequently, \citet{PluralityVeto} showed that a much simpler voting~rule, called \pluveto, achieves the same guarantee of 3; 
the rule implicitly constructs a perfect matching in the domination graph witnessing the distortion guarantee, leading to a much shorter proof.
Recall that, up to some subtle tie breaking issues, \pluveto equals \apvveto{1}; the~only difference is that the~former immediately eliminates all candidates whose score reaches 0, and thus, insists on picking a~single winner.

A more general class of voting rules, based on selecting winners who have weighted bipartite matchings in a~more general domination graph, was already considered by \citet{gkatzelis:halpern:shah:resolving}, and also by \citet{PluralityVeto}.
The~connection to the proportional veto core (\cref{def:prop-veto-core}) was observed by \citet{petersNote2023} and explored in more depth by \citet{GeneralizedVetoCore}, who established --- for general weights --- the equivalence between a general version of \kapvveto, 
matchings in generalized domination graphs, and a general definition of the $(p,q)$-veto core with weights $p,q$ (\cref{def:pq-veto-core}). 
The equivalence to matchings was observed for the case of the proportional veto~core by \citet{ianovskiComputingProportionalVeto2021} who use this equivalence to compute the proportional veto core in polynomial time.
The~utilitarian metric distortion of candidates in the general $(p,q)$-veto core was investigated in more depth by \citet{learning_augmented}, motivated in part by the goal of selecting candidates with lower distortion when an oracle provides advice.\footnote{We draw significantly on the results of \citet{learning_augmented} for the analysis of utilitarian metric distortion of candidates in the $k$-approval veto core; we also utilize one of their lemmas for the $\alpha$-percentile objective.}

Various other considerations have played a role in the analysis of metric distortion.
The use of randomization in the selection of a winner can significantly improve the metric distortion: an upper bound of $3-o(1)$ and a lower bound of 2 had been known from the early work on distortion \citep{anshelevich:postl:randomized,DistortionCommunication}.
In recent breakthrough results, both the upper and lower bounds have been improved \citep{charikar:ramakrishnan:randomized-distortion, breaking_barrier}.
Several works have achieved better bounds on metric distortion when voters can communicate additional information beyond a ranking, such as (limited) information about the \emph{strengths} of their preferences \citep{limited1,limited2,improved_metric_distortion}.
More generally, the tradeoff between communication and distortion in voting rules has been considered \citep{fain:goel:munagala:prabhu:referee,mandal:procaccia:shah:woodruff,pierczynski:skowron:approval,bentert:skowron:few-candidates,DistortionCommunication}.
We refer to the surveys by \citet{anshelevich:filos-ratsikas:shah:voudouris:retrospective, anshelevich:filos-ratsikas:shah:voudouris:reading-list} for further discussion on distortion.

A sequential veto-based mechanism was first studied formally by \citet{muellerVotingVeto1978}.
\citet{moulin:proportional-veto-principle} generalized this voting rule from individuals to coalitions and studied the core of the resulting cooperative game, and thus, introduced the~proportional veto core.
The proportional veto core has found applications in windfarm location \citep{windfarm_location}, nuclear fuel disposal \citep{nuclear_fuel_disposal}, and federated learning \citep{federated_learning}, among others.
Subsequent to the introduction of the proportional veto core, \citet{moulinVotingProportionalVeto1982} proposed a rule, called \textsc{VoteByVeto}, which elects a winner from the proportional veto~core; up~to~ties, \textsc{VoteByVeto} is equivalent to \apvveto{m}. 
\citet{moulinVotingProportionalVeto1982} also studied the strategic behavior of voters under this rule.
\citet{ianovskiComputingProportionalVeto2021} (see also the expanded version \cite{ianovskiComputingProportionalVeto2023}) showed how to compute the proportional veto core in polynomial time and introduced an anonymous and neutral rule electing from the~proportional veto core.
More recently, \citet{veto_core_consistent} also studied the axiomatic properties of rules picking a candidate from the proportional veto core.

The trade-off between the majority and minority principles has also been a central topic in other works.  
Most notably, our work is closely related to that of \citet{kondratevMeasuringMajorityPower2020},  
who examined the trade-off axiomatically across a wide range of voting rules. 
They introduced the \emph{$(q, k)$-majority criterion} which requires that if more than $q$ voters rank the same $k$ candidates within their top $k$ choices  
(possibly in different orders), then the winner must be selected from among these $k$ candidates.  
A candidate satisfies the $\ell$-mutual minority criterion if and only if it~satisfies the $(q, k)$-majority criterion for all $1 \le k \le m - 1$ and $q \ge \frac{n(m - k)}{\ell}$.
Thus, their analysis (Theorem 9) shows that every candidate $c$ in the proportional veto core satisfies the mutual minority criterion, that is, $c$~achieves the optimal mutual minority protection of $m$.

\citet{similar_approach} also follow a similar approach to our work and explore a~continuous spectrum between $k$-Borda and Chamberlin-Courant voting rules in the multi-winner setting.
These two voting rules, respectively, represent the two extremes of the majority and minority principles in the~multi-winner world.
For a detailed overview of multi-winner elections, see \cite{ABCbook}.

\section{Preliminaries}
\label{sec:prelims}

An~\emph{election} \election consists of a~set $\V$ of $n$ \emph{voters}, a~set $\C$ of $m$ \emph{candidates}, and \emph{rankings} $\prof = {(\succ_v)_{v \in V}}$.
In~this notation, $\succ_v$ is the~\emph{ranking of voter $v$}, i.e., a~total order over $\C$ which represents the~preferences of $v$.
We~write $a \succ_v b$ to express that voter $v$ prefers candidate $a$ over candidate $b$; we write  ${a \succeq_v b}$ if $a = b$ or $a \succ_v b$, and say that $v$ \emph{weakly~prefers} $a$ over $b$. 
We~also extend this notation to \emph{coalitions}, i.e., non-empty subsets of voters. 
By $a \succ_T b$, we denote that every voter in the~coalition $T$ prefers $a$ over $b$; we~write $A \succ_T B$ if $a \succ_T b$ for all $a \in A$ and $b \in B$.
The~complement symbol is always used with respect to the~``obvious'' ground set, i.e., $\comp{T} = V \setminus T$ if $T \subseteq V$, and $\comp{S} = \C \setminus S$ if $S \subseteq \C$. 

The~\emph{top $k$ choices} of voter $v$, denoted by $\top_k(v)$, are the~set of $k$~candidates that $v$ prefers over all other candidates. 
We also use $\top(v)$ to denote the~\emph{top choice} of $v$, that is, $\top_1(v) = \set{\top(v)}$.
The~\emph{plurality score} of candidate~$c$, denoted by $\plu(c)$, is the~number of voters whose top choice~is~$c$.
The~\emph{$k$-approval score} of candidate~$c$, denoted by $\kapv(c)$, is the~number of voters who have $c$ among their \emph{top $k$ choices}, i.e., $\kapv(c) = {\card{\setbuild{v \in \V}{c \in \top_k(c)}}}$.
Extending this notation to sets via addition, we define $\kapv(S) = \sum_{c \in S} \kapv(c)$ for $S \subseteq \C$.
Similarly, we write $\bot_S(v)$ for the~least preferred candidate of voter $v$ among a subset of candidates $S \subseteq C$.

A~voting rule $\alg$ is a function that, given an~election $\elec$ as input, returns a~non-empty set of candidates ${\alg(\elec) \subseteq \C}$.
We~refer to $\alg(\elec)$ as the~\emph{(tied) winners} of $\elec$ under $\alg$, or simply as the~winners under~$\alg$, when $\elec$ is clear from the~context. 
  
\subsection{$k$-Approval Veto Core}

In game theory, the term ``core'' is generally used to refer to a~set of outcomes that are not ``blocked'' by any coalition.
In general games, a coalition is said to block an outcome if the members jointly prefer to deviate to another outcome.  
Collective decisions can also be thought of as outcomes (of a game played between voters) that can be blocked by coalitions that are sufficiently large, which provides a game-theoretical perspective on social choice \cite{stability_in_voting}. 

We refer to the~set of all possible winners of \kapvveto (outlined in the introduction and specified precisely in \cref{alg:kapvveto}) as the~\emph{$k$-approval veto core}, 
because they can be characterized via a~core definition (\cref{def:kapv-veto-core} below) by a result of \citet{GeneralizedVetoCore}.
Note that the choice of winner(s) for \kapvveto depends on the~given \emph{veto order} --- a~repeated sequence of voters $\sigma = (v_1, \ldots, v_{nk})$ in which each voter occurs exactly $k$ times.

\begin{algorithm}[h]
\caption{\kapvveto with veto order $(v_1, \ldots, v_{nk})$.}
\label{alg:kapvveto}
\begin{algorithmic}[1]
\State{initialize the~set of eligible winners $W = \C$} 
\State{initialize $\score(c) = \kapv(c)$ for all $c \in \C$}
\smallskip
\For{$i = 1, \ldots, nk$} \label{alg:for-start}
	\While{$\score(\bot_W(v_i)) = 0$} \label{alg:while-start}
			\State{remove $\bot_W(v_i)$ from $W$} 
	\EndWhile \label{alg:while-end}
	\State{decrement $\score(\bot_W(v_i))$ by $1$} 
\EndFor \label{alg:for-end}	
\State{\Return{$W$}}
\end{algorithmic}
\end{algorithm}  

As mentioned above, the winners under \kapvveto can be characterized by a definition of a notion of core.
This definition is a natural generalization of the \emph{proportional veto core} of \citet{moulin:proportional-veto-principle}, as discussed below.
  
\begin{definition}[$k$-Approval Veto Core \cite{GeneralizedVetoCore}] \label{def:kapv-veto-core}
	A~coalition $T \subseteq \V$ of voters \emph{$k$-blocks} candidate $w$ with \emph{witness set} $S \subseteq \C$ if and only if  $S \succ_T w$ and
	\[\frac{\card{T}}{n} > 1 - \frac{\kapv(S)}{nk}.\]
	The~\emph{$k$-approval veto core} of an election $\elec$ is the set of all candidates not $k$-blocked by any coalition, which we~denote by $\avc[\elec]{k}$.
	(We~drop $\elec$ from the~notation when it~is clear from the~context.)
\end{definition}

Another characterization of the possible winners can be obtained via \emph{$k$-domination graphs}\footnote{A more general definition than $k$-domination graphs was first given in \cite{gkatzelis:halpern:shah:resolving}. 
The definition crystallized earlier similar definitions \citep{munagala:wang:improved,DistortionDuality} into a more concise form.
1-domination graphs were the key tool in showing that \plumatching  \cite{gkatzelis:halpern:shah:resolving} and \pluveto \cite{PluralityVeto} achieve the optimal utilitarian metric distortion of 3.}
which are the main technical tool used to analyze the distortion of \kapvveto in \cref{sec:distortion}.

\begin{definition}[$k$-Domination Graphs \cite{gkatzelis:halpern:shah:resolving}] \label{def:domination_graphs}
	The \emph{$k$-domination graph} of candidate $w$ is a~bipartite graph $\domg{k}(w)$ defined between $k$~copies of each voter $v$ and $\kapv(c)$ copies of each candidate $c$. 
	The graph $\domg{k}(w)$ contains an edge between (each copy of) $v$ and (each copy of) $c$ if and only if $w \succeq_v c$.
\end{definition} 

\cref{def:kapv-veto-core} and \cref{def:domination_graphs} characterize the possible winners of \kapvveto by means of \cref{thm:equivalence} \cite{GeneralizedVetoCore}.

\begin{theorem}[\citet{GeneralizedVetoCore}, Theorems 1--3]
  \label{thm:equivalence}
	For every~$k$ and candidate $w$, the~following three statements are equivalent: 
	\begin{enumerate}[(1)]
		\item $w$ is a winner of \kapvveto for some veto order.
		\item No coalition $k$-blocks $w$, i.e., $w \in \avc{k}$.
		\item $\domg{k}(w)$ has a~perfect matching.
	\end{enumerate}
\end{theorem}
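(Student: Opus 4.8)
The plan is to prove the theorem as a cycle of implications, combining a clean application of Hall's marriage theorem for $(2)\Leftrightarrow(3)$ with a direct analysis of the algorithm for $(1)\Leftrightarrow(3)$. As a uniform first step I would rewrite the blocking condition of \cref{def:kapv-veto-core}: the inequality $\card{T}/n > 1 - \kapv(S)/(nk)$ is equivalent to $\kapv(S) > k\,\card{\comp{T}}$, and since $S \succ_T w$ forces $T \subseteq T_S := \setbuild{v \in \V}{S \succ_v w}$, the largest coalition that can block with a fixed witness $S$ is $T_S$ itself. Hence $w \in \avc{k}$ if and only if $\kapv(S) \le k\,\card{\comp{T_S}}$ for every $S \subseteq \C$.

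For $(2)\Leftrightarrow(3)$ I would invoke Hall. The two sides of $\domg{k}(w)$ are balanced: the voter side has $nk$ copies, and the candidate side has $\sum_{c} \kapv(c) = nk$ copies, because each voter approves exactly $k$ candidates. Since all copies of a fixed candidate share the same neighborhood, any violating set for Hall's condition on the candidate side can be enlarged to a union of full candidate-classes of some $S \subseteq \C$ without destroying the violation; its neighborhood consists of all $k$ copies of every voter that ranks some $c \in S$ weakly below $w$, that is, of every voter outside $T_S$, and so has size $k\,\card{\comp{T_S}}$. Thus Hall's condition reads $\kapv(S) \le k\,\card{\comp{T_S}}$ for all $S$, which is exactly the reformulated core membership from the previous paragraph, giving $(2)\Leftrightarrow(3)$.

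For $(1)\Rightarrow(3)$ I would extract a matching from a successful run. Fix a veto order for which $w$ survives, and whenever the veto vote of $v_i$ terminates by decrementing a candidate $c_i$, record the pair consisting of a copy of $v_i$ and a copy of $c_i$. A score-accounting argument shows that every candidate $c$ is decremented exactly $\kapv(c)$ times: the number of decrements equals the number of veto votes $nk = \sum_c \kapv(c)$, and eliminations only ever remove candidates whose score has already reached $0$, so every candidate---whether it survives (ending at score $0$) or is eliminated (at score $0$)---absorbs exactly $\kapv(c)$ decrements. Consequently the recorded pairs use each voter-copy and each candidate-copy exactly once. Finally each pair is an edge of $\domg{k}(w)$: when $v_i$'s veto strikes $c_i = \bot_W(v_i)$, the candidate $w$ is still in $W$ (it survives) and $c_i$ is $v_i$'s least-preferred surviving candidate, so $w \succeq_{v_i} c_i$ (trivially if $c_i = w$). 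Hence the recorded pairs form a perfect matching.

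The remaining direction $(3)\Rightarrow(1)$ is where I expect the real work. Given a perfect matching $M$, each voter $v$ is assigned $k$ candidate-copies all ranked weakly below $w$ by $v$, and each candidate $c$ receives exactly $\kapv(c)$ assignments; I would use $M$ to build a veto order making each veto of $v$ land on one of $v$'s assigned candidates, so that every $c$ absorbs exactly $\kapv(c)$ decrements while $w$ is never eliminated. The obstacle is that the algorithm cannot freely choose which matched candidate to hit---each veto is forced onto the voter's current bottom surviving candidate. I would therefore argue inductively that one can always find a candidate $c^\ast \ne w$ together with a prefix of veto votes, drawn from the voters assigned to $c^\ast$ by $M$, that drives $c^\ast$ to elimination while leaving a perfect matching on the residual copies, and then recurse on the smaller configuration. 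The crux is to sequence the votes so that no voter is ever forced to veto $w$ while $\score(w)=0$; I would control this through the invariant that $w$'s residual score always equals the number of $w$-copies still present in the matching, which guarantees that $w$ has positive score precisely whenever a voter assigned to it casts the corresponding veto.
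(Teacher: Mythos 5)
First, a point of reference: the paper does not prove \cref{thm:equivalence} at all --- it is imported verbatim from \citet{GeneralizedVetoCore} (their Theorems 1--3), so there is no in-paper argument to compare against; your proposal has to stand on its own. On its own merits it is about three-quarters right. The reformulation of $k$-blocking as $\kapv(S) \le k\,\card{\comp{T_S}}$ for all $S \subseteq \C$, where $T_S = \setbuild{v}{S \succ_v w}$, is correct; the Hall's-theorem equivalence $(2)\Leftrightarrow(3)$ is correct and complete (both sides of $\domg{k}(w)$ have $nk$ nodes since $\sum_c \kapv(c) = nk$, copies of a candidate share neighborhoods so deficient sets close up to unions of full candidate classes, and the neighborhood of the copies of $S$ is exactly the $k\,\card{\comp{T_S}}$ copies of voters outside $T_S$); and the extraction of a perfect matching from a successful run gives $(1)\Rightarrow(3)$ correctly --- though you should state explicitly why each of the $nk$ vetoes decrements exactly once (the total remaining score before veto $i$ is $nk-i+1 > 0$, so the while-loop can never empty $W$), which is what makes the counting ``each $c$ absorbs exactly $\kapv(c)$ decrements'' go through.

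The genuine gap is $(3)\Rightarrow(1)$, which you rightly flag as the real work but whose plan fails as stated. You fix the matching $M$ up front and want each veto of $v$ to land on one of $v$'s $M$-assigned candidates, driving some $c^*$ to elimination using ``the voters assigned to $c^*$ by $M$.'' The algorithm, however, forces each veto onto the voter's bottom-ranked surviving candidate with positive score, and the given $M$ can be circularly deadlocked. Concretely, take $k=1$, candidates $\set{w,x,y}$, and four voters $v_1\colon w \succ x \succ y$, \ $v_2\colon w \succ y \succ x$, \ $v_3\colon x \succ w \succ y$, \ $v_4\colon y \succ w \succ x$, so $\plu(w)=2$, $\plu(x)=\plu(y)=1$, and $M = \set{v_1 \to x,\ v_2 \to y,\ v_3 \to w,\ v_4 \to w}$ is a perfect matching of $\domg{1}(w)$. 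For $v_1$'s veto to land on $x$, candidate $y$ must already be at score $0$, which under $M$ only $v_2$ can arrange; for $v_2$'s veto to land on $y$, candidate $x$ must already be at score $0$, which only $v_1$ can arrange --- a deadlock, so no veto order realizes $M$, and at the start neither $c^* = x$ nor $c^* = y$ can be eliminated by its $M$-assigned voter as a prefix, so your inductive step has no candidate to pick. Your invariant about $w$'s residual score does not rescue this, since the failure occurs entirely among candidates other than $w$. (Here $w$ is indeed a possible winner --- the order $v_3, v_1, v_2, v_4$ elects it --- but that run realizes a \emph{different} matching, namely $v_3 \to y$, $v_1 \to x$, $v_2, v_4 \to w$.) A correct proof must therefore re-route the matching adaptively as the run proceeds, e.g., by an exchange/augmenting-path argument or by re-invoking Hall's condition on the residual instance at each step to show that some currently forced veto can be absorbed consistently with \emph{some} residual perfect matching; that existence claim is exactly the crux of the cited result, and it is precisely what your sketch asserts without proof.
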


We remark that $\avc{m-1} \subseteq \avc{m}$, because any perfect matching in $\domg{m-1}(w)$ can be extended to a perfect matching in $\domg{m}(w)$ by matching the \Kth{m} copy of voter $v$ to the bottom choice of $v$. 
This construction does not extend to other $k$, because the edge from $v$ to the \Kth{k} ranked choice of $v$ may not exist in $\domg{k}(w)$ for $k < m$.
We also remark that in general, the inclusion $\avc{m-1} \subset \avc{m}$ can be strict. For example, for 3 candidates $\set{a, b, c}$ and 12 voters of whom 7 rank $a \succ b \succ c$ and 5 rank $b \succ a \succ c$, we have that $\avc{2} = \set{a}$ while $\avc{3} = \set{a, b}$.

\subsubsection{Related Notions}

The classical notion of \emph{proportional veto core} by \citet{moulin:proportional-veto-principle} (given in \cref{def:prop-veto-core}) is equivalent to the $m$-approval veto core because $\apv{m}(S) = n \cdot \card{S}$ for all $S \subseteq C$. 

\begin{definition}[Proportional Veto Core\footnote{We slightly rearrange the~definition in \citep{moulin:proportional-veto-principle} here for clarity.} 
	\citep{moulin:proportional-veto-principle}]
	\label{def:prop-veto-core}
	A coalition $T \subseteq \V$ of voters \emph{blocks} candidate $c$ with \emph{witness set} $S \subseteq \C$ if $S \succ_T c$ and \[\frac{|T|}{n} > 1 - \frac{|S|}{m}.\]
	The~\emph{proportional veto core} is the set of all candidates not blocked by any coalition.
\end{definition}

The \kapvveto voting rule, the notion of the $k$-approval veto core, and the notion of $k$-domination graphs can all be generalized further, to fractional weights and non-uniform weights not only for candidates, but also for voters.
A general definition and proof of a general version of \cref{thm:equivalence} were given by \citet{GeneralizedVetoCore}.
The key definition is the following, for any normalized vectors $\pvec$ and $\qvec$, representing weights over voters and candidates, respectively.

\begin{definition}[\pq-Veto Core \cite{GeneralizedVetoCore}]
\label{def:pq-veto-core}
   A coalition $T \subseteq \V$ of voters \emph{\pq-blocks} candidate $c$ with \emph{witness set} $S \subseteq \C$ if $S \succ_T c$ and $\p{T} > 1 - \q{S}$.
	The~\emph{\pq veto core} is the set of all candidates not \pq-blocked by any coalition.
\end{definition}

The $k$-approval veto core equals the \pq-veto core with ${\p{v} = \frac{1}{n}}$ for all $v \in \V$ and ${\q{c} = \frac{\kapv(c)}{nk}}$ for all $c \in \C$.

\section{Minority Protection}
\label{sec:minority}

In this section, we introduce the $\ell$-mutual minority criterion, which we then use to measure the opposition of minorities to a particular candidate being chosen as the winner. 
Our main result in this section is that \kapvveto provides better protection to minorities as $k$ increases.

Recall from the case study in the introduction that when voters can only communicate a single choice to the voting rule, the best possible guarantee for protecting minorities from the worst outcomes is the minority criterion --- if~a candidate $c$ is the bottom choice for more than $n/m$ voters, then $c$ should not be elected.
When full rankings are available, stronger requirements can be formulated. 
For example, if more than $2n/m$ voters have the same bottom 2 choices (possibly in different orders), then neither candidate should be elected.
This idea can be captured more generally via the~notion of \emph{solid coalitions},
popularized by \citet{Dummett1984}.

\begin{definition}[Solid Support \cite{Dummett1984}]
\label{def:solid-support}
  A~coalition $T \subseteq \V$ \emph{solidly supports} a~subset of candidates $S \subseteq \C$ if ${S \succ_T \comp{S}}$, i.e., all voters in $T$ have $S$ as their top choices (though possibly in different orders).
\end{definition}

Note that each singleton coalition $\set{v}$ solidly supports $\top(v)$;
thus, the relation of $T$ solidly supporting $S$ is an~extension of the relation of $c$ being the top choice of $v$.
Solid coalitions are used extensively in social choice theory. 
Most notably, the well-known \emph{mutual majority criterion} is a~generalization of the majority criterion via solid coalitions, considering not only the top choices but the full rankings. 

In the same vein, we define the \emph{mutual minority criterion}, generalizing the minority criterion via solid coalitions, to consider not only the bottom choices but the full rankings.
First, we extend the relation of candidate $c$ being the bottom choice of voter~$v$ by defining solidly vetoing coalitions.   

\begin{definition}[Solid Veto]
 \label{def:solid-veto} 
	A coalition $T \subseteq \V$ \emph{solidly vetoes} $S \subseteq \C$ if $\comp{S} \succ_T S$.
\end{definition}

Then, we say that a candidate $c$ satisfies the mutual minority criterion if $c \notin S$ for all $S \subseteq \C$ solidly~vetoed by a coalition $T \subseteq V$ with size $\card{T} / n > \card{S} / m$.
As discussed in the related work, this~criterion is equivalent to satisfying the $(q, k)$-majority criterion of \citet{kondratevMeasuringMajorityPower2020} for all $1 \le k \le m - 1$ and $q \ge \frac{n(m - k)}{m}$.
Thus, their analysis (Theorem 9) shows that every candidate in the proportional veto core satisfies the mutual minority criterion. 
  
However, this strong minority protection comes at a~high cost in terms of the metric distortion, a good stand-in for social welfare.
This motivates us to propose a parameterized definition of minority protection which interpolates smoothly between the strong requirements imposed by the proportional veto core and essentially no protection of the minority.
Such a definition will then allow us to study the tradeoffs between protection of minorities and welfare of the majority, quantifying and analyzing the tradeoffs between majority and minority principles. 
 
\begin{definition}[$\ell$-Mutual Minority Criterion] \label{def:droop-minority}
	A candidate $c$ satisfies the \emph{$\ell$-mutual minority criterion} if $c \notin S$ for all $S \subseteq C$ solidly vetoed by a coalition $T \subseteq V$ with size $\card{T} / n > \card{S} / \ell$. 
\end{definition} 

We write $\dmc{\ell}$ to denote the set of all candidates satisfying the $\ell$-mutual minority criterion.
Observe that $\dmc{\ell} \subseteq \dmc{\ell-1}$ for all $\ell$, that is, the $\ell$-mutual minority criterion gets harder to satisfy for larger $\ell$.
In particular, $\dmc{1} = \C$ contains all candidates, and $\dmc{m+1} = \emptyset$ (by considering ${T = \V}$ and ${S = \C}$).
Consequently, for every candidate $c$, there exists a unique ${\ell \in \set{1, \ldots, m}}$ such~that $c \in \dmc{\ell} \setminus \dmc{\ell+1}$; this gives rise to a well-defined measure, which we refer to as the~\emph{mutual minority protection} of candidate $c$.

Our main result in this section is the following theorem, showing that higher values of $k$ provide higher mutual minority protection for candidates in the $k$-approval veto core. 

\begin{theorem}
\label{thm:minority} \label{thm:minority:prop-veto-core} 
	The mutual minority protection of every candidate in the~$k$-approval veto core is at~least $k$, i.e., $\avc{k} \subseteq \dmc{k}$.   
\end{theorem}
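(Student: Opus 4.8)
The plan is to establish the contrapositive: I will show that any candidate violating the $k$-mutual minority criterion is necessarily $k$-blocked, and hence lies outside $\avc{k}$. So suppose $c \notin \dmc{k}$. By \cref{def:droop-minority}, there is a set $S \subseteq \C$ with $c \in S$ that is solidly vetoed by a coalition $T$ satisfying $\card{T}/n > \card{S}/k$. Since $\card{T} \le n$, this already forces $\card{S} < k \le m$, so in particular $\comp{S} \neq \emptyset$; I record this because a witness set in \cref{def:kapv-veto-core} ought to be a genuine nonempty subset of candidates, and the degenerate case $S = \C$ can never trigger a violation anyway.

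The key idea is to use $\comp{S}$ as the witness set for $k$-blocking $c$, keeping the same coalition $T$. The solid-veto condition $\comp{S} \succ_T S$ (from \cref{def:solid-veto}) together with $c \in S$ immediately yields $\comp{S} \succ_T c$, so the preference requirement of \cref{def:kapv-veto-core} is satisfied. It then remains only to verify the quantitative blocking inequality $\card{T}/n > 1 - \kapv(\comp{S})/(nk)$.

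For this last step I will combine two elementary facts. First, summing $k$-approval scores over all candidates counts each voter exactly $k$ times, so $\kapv(\C) = nk$ and hence $\kapv(\comp{S}) = nk - \kapv(S)$; this rewrites the threshold $1 - \kapv(\comp{S})/(nk)$ as $\kapv(S)/(nk)$. Second, each candidate's $k$-approval score is at most $n$, so $\kapv(S) \le n\card{S}$. Chaining these with the hypothesis gives
\[
  \frac{\card{T}}{n} > \frac{\card{S}}{k} = \frac{n\card{S}}{nk} \ge \frac{\kapv(S)}{nk} = 1 - \frac{\kapv(\comp{S})}{nk},
\]
so $T$ indeed $k$-blocks $c$ with witness set $\comp{S}$, contradicting $c \in \avc{k}$. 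I do not expect a genuine technical obstacle here: once the witness set is chosen as $\comp{S}$, the proof is a one-line chain of inequalities. The only substantive choices are taking the complement as the witness and invoking the bookkeeping identity $\kapv(\comp{S}) = nk - \kapv(S)$, which is exactly what converts the minority-side threshold $\card{S}/k$ into the veto-core-side threshold $1 - \kapv(\comp{S})/(nk)$.
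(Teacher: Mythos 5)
Your proof is correct and follows essentially the same route as the paper's: the contrapositive, the choice of $\comp{S}$ as the witness set, and the chain $\card{T}/n > \card{S}/k \ge \kapv(S)/(nk) = 1 - \kapv(\comp{S})/(nk)$ via $\kapv(\comp{S}) = nk - \kapv(S)$ and $\kapv(S) \le n\card{S}$ match the paper's argument exactly. Your extra observation that $\card{S} < k$ forces $\comp{S} \neq \emptyset$ is a harmless refinement the paper leaves implicit.
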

  
\begin{proof}
  	We show that $\comp{\dmc{k}} \subseteq \comp{\avc{k}}$.
	Let~$c \notin \dmc{k}$.
	Then, there exists a coalition $T$ solidly vetoing a subset of candidates $S \ni c$ such that $\card{T}/n > \card{S} / k$.
	Now, considering $\comp{S}$,
	we observe that $\comp{S} \succ_T c$ and
	$\kapv(\comp{S}) = nk - \kapv(S) \geq nk - n\card{S}$; 
	hence, $\card{S}/{k} \geq 1 - \kapv(\comp{S})/nk$.
	Substituting this inequality, we obtain that $\card{T}/n > 1 - \kapv(\comp{S}) / nk$. 
	By \cref{def:kapv-veto-core}, this means that $T$ $k$-blocks $c$ with witness set $\comp{S}$. 
	Hence, $c \notin \avc{k}$. 
\end{proof}

The above lower bound of $k$ is not tight.  
Indeed, the mutual minority protection of every candidate in the $1$-approval veto core is at least $2$.  
This is because $\dmc{2}$ consists of exactly the candidates that are ranked last by at most $n / 2$ voters,  
and candidates that are ranked last by strictly more than $n / 2$ voters cannot possibly win under \apvveto{1} $=$ \pluveto.

\section{Metric Distortion}
\label{sec:distortion}

In this section, we present a complete analysis of the metric distortion of the $k$-approval veto core.
We begin by reviewing the relevant definitions.

\subsection{Framework}

A~\emph{metric} over a~set $S$ is a~function $d : S \times S \rightarrow \mathbb{R}_{\geq 0}$ which satisfies the~following three conditions for all ${a, b, c \in S}$: 
(1)~Positive Definiteness: $d(a,b) = 0$ if and only\footnote{Our proofs do not require the ``only if'' condition, so technically, all our results hold for pseudo-metrics, not just metrics.} if ${a=b}$; 
(2)~Symmetry: $d(a, b) = d(b, a)$; 
(3)~Triangle inequality: $d(a,b) + d(b,c) \geq d(a,c)$.
Given an~election \election, we~say that a~metric $d$ over $\V \cup \C$ is \emph{consistent} with the~rankings $\prof$, and write $d \cons \prof$, if $d(v, c) \leq d(v, c')$ for all $v \in \V$ and $c, c' \in \C$ such that $c \succ_v c'$.

The~metric distortion framework of \citet{anshelevich:bhardwaj:elkind:postl:skowron} characterizes the~quality of a~candidate $w$ (chosen as the~\emph{winner}) based on the~distances between voters and $w$.
Specifically, we study the~following three notions of social cost.
Given a~candidate $w$ and a~metric ${d \cons \prof}$, (1)~the~\emph{utilitarian social cost} of $w$ is defined as $\cost^+_d(w) = \sum_{v \in \V} d(v,w)$; 
(2)~the~\emph{$\alpha$-percentile social cost} of $w$, for a~given ${\alpha \in [0, 1)}$, is defined as $\cost^\alpha_d(w) = d(\per^\alpha_d(w), w)$ 
where $\per^\alpha_d(w)$ denotes the~$\floor{\alpha n + 1}\th$ closest voter to $w$ under $d$;
and (3)~the~\emph{egalitarian social cost} of $w$ is defined as $\cost^1_d(w) = \max_{v \in V} d(v,w)$.%
\footnote{Note that the egalitarian social cost is not subsumed by the $\alpha$-percentile social cost as the latter is not well-defined for $\alpha = 1$.}
The~metric distortion of a~candidate under utilitarian, $\alpha$-percentile, or egalitarian social cost is defined  as follows.

\begin{definition}[Metric Distortion] \label{def:distortion}
	Under the~social cost objective ${\obj \in \set{+, \alpha, 1}}$, the~\emph{(metric) distortion} of a~candidate $c$ in an election $\elec$ is the~largest possible ratio between the~social cost of $c$ and that of an optimal candidate $c^*_d$ under any metric ${d \cons \prof}$. That is,
	\[\dist^\obj_\elec(c) = \sup_{d \cons \prof} \frac{\cost^\obj_d(c)}{\cost^\obj_d(c^*_d)}.\]
	We refer to $\dist^+_\elec(c)$, $\dist^\alpha_\elec(c)$ and $\dist^1_\elec(c)$, respectively, as the \emph{utilitarian}, the~\emph{$\alpha$-percentile} and the~\emph{egalitarian} (metric) distortion of candidate~$c$.
	Extending the~notion to voting rules, we say that a~voting rule $\alg$ has distortion (at most) $\theta$ if $\dist^*_\elec(w) \leq \theta$ for every election $\elec$ and for every winner $w$ of $\elec$ under $\alg$, i.e., $\dist^*(\alg) = \max_{\elec} \max_{w \in \alg(\elec)} \dist^*_\elec(w)$.
\end{definition}

\subsection{Utilitarian Metric Distortion}

We begin by pinning down the most frequently studied notion of distortion, namely, utilitarian metric distortion, of candidates in the $k$-approval veto core, for all $k$.
The upper bound for $k < m$ follows from a recent bound by \citet{learning_augmented} for the \pq-veto core (see \cref{def:pq-veto-core}).
Our main contribution is therefore an improved bound for $k=m$ (i.e., the proportional veto core, see \cref{def:prop-veto-core}) and a matching lower bound for all $k$.

\begin{theorem} \label{thm:util}
	 The utilitarian metric distortion of every candidate in the $k$-approval veto core is at most $2 \min(k+1, m) - 1$, and this bound is tight, i.e., for all $k$, there is an election $\elec$ with $\dist^+_\elec(c) = 2 \min(k+1, m) - 1$ for some $c \in \avc[\elec]{k}$.
\end{theorem}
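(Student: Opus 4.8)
The plan is to treat the upper and lower bounds separately, and within the upper bound to split off the case $k = m$ (the proportional veto core), for which I give a self-contained argument, from the case $k < m$, which follows by specializing the \pq-veto core bound of \citet{learning_augmented} to $\p{v} = 1/n$ and $\q{c} = \kapv(c)/(nk)$, yielding $2(k+1)-1 = 2\min(k+1,m)-1$. Throughout I would exploit \cref{thm:equivalence}: since $w \in \avc{k}$, the graph $\domg{k}(w)$ admits a perfect matching, which I record as multiplicities $x_{v,c} \ge 0$ counting how many of the $k$ copies of voter $v$ are matched to copies of candidate $c$. By construction $\sum_c x_{v,c} = k$, $\sum_v x_{v,c} = \kapv(c)$, and $x_{v,c} > 0$ forces $w \succeq_v c$, hence $d(v,w) \le d(v,c)$ for every $d \cons \prof$.

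For the $k=m$ upper bound, fix any optimal $c^*$ and metric $d \cons \prof$, and write $D = d(w, c^*)$. First I would bound $D$ using the column of $c^*$ in the matching: since $\apv{m}(c^*) = n$, there are exactly $n$ copies of $c^*$, each matched to some $v$ with $w \succeq_v c^*$, and for such $v$ the triangle inequality gives $D \le d(w,v) + d(v,c^*) \le 2\,d(v,c^*)$; summing, $nD = \sum_v x_{v,c^*} D \le 2\sum_v x_{v,c^*} d(v,c^*) \le 2m\,\cost^+_d(c^*)$, using $x_{v,c^*} \le m$. Next, splitting voters by whether they weakly prefer $w$ to $c^*$, I would write $\cost^+_d(w) \le \cost^+_d(c^*) + B\cdot D$, where $B = \card{\setbuild{v}{c^* \succ_v w}}$, since $d(v,w) \le d(v,c^*)$ already holds for all remaining voters. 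Finally, applying \cref{def:kapv-veto-core} with the singleton witness $S = \set{c^*}$ (which is not allowed to block $w$) gives $B/n \le 1 - \apv{m}(c^*)/(nm) = (m-1)/m$, and combining the three estimates yields $\cost^+_d(w) \le \cost^+_d(c^*) + \frac{m-1}{m}\cdot 2m\,\cost^+_d(c^*) = (2m-1)\,\cost^+_d(c^*)$. The only place this uses $k=m$ is the uniformity $\apv{m}(c^*)=n$, which is exactly what makes both the $D$-bound and the $B$-bound clean; for general $k$ both quantities depend on $\kapv(c^*)$, which is why I defer to \citet{learning_augmented} there.

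For tightness I would first give the clean construction for $k=m$: take $\frac{m-1}{m}n$ voters (group $A$) ranking $c^* \succ w \succ \cdots$ and $\frac{1}{m}n$ voters (group $B$) ranking $w \succ c^* \succ \cdots$, placing group $A$ together with $c^*$ at one point, $w$ at distance $2$, group $B$ at the midpoint, and every remaining candidate at distance $3$. A direct check shows that the only binding core constraint is the singleton $S = \set{c^*}$, which holds with equality, so $w \in \avc{m}$ by \cref{def:kapv-veto-core}; and the displayed metric gives $\cost^+_d(w)/\cost^+_d(c^*) = 2m-1$, so that $\dist^+_\elec(w) = 2m-1$. For $k < m$ the target ratio is $2k+1$, and here I would build an analogous instance in which $w$ is metrically far from a $\frac{k}{k+1}$-fraction of the voters while staying in $\avc{k}$, keeping the approval scores of the candidates ranked above $w$ small enough that every witness set $S$ obeys the bound of \cref{def:kapv-veto-core}.

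The hard part will be precisely this last construction. Unlike the $k=m$ case, where uniform approval scores let a large majority cluster at $c^*$ and still leave $w$ unblocked, for $k<m$ any voter ranking $c^*$ first inflates $\kapv(c^*)$ and strengthens the witness $S=\set{c^*}$; at the same time the triangle inequality ties together ``$w$ is far from the majority'' and ``enough voters rank $w$ above $c^*$'' (which membership in $\avc{k}$ forces), so the distances, the profile, and the witness-set budget must be balanced simultaneously, and naive two-cluster instances are obstructed by exactly these triangle-inequality constraints. I expect the cleanest route is to reverse-engineer the equality conditions of the \citet{learning_augmented} upper bound to read off the extremal profile together with a family of consistent metrics whose ratio approaches $2k+1$, and then to certify $w \in \avc{k}$ directly via the matching characterization of \cref{thm:equivalence}.
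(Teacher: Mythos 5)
Your two upper-bound arguments are sound. For $k<m$ you defer to \citet{learning_augmented} exactly as the paper does (the paper adds the one-line observation $\plu(c) \le \kapv(c)$ to get from $1 + 2k\max_c \plu(c)/\kapv(c)$ to $2k+1$, which you elide but clearly intend). For $k=m$ your route is genuinely different and correct: the paper invokes \cref{lem:anshelevich:util} (Lemma~6 of \citet{anshelevich:bhardwaj:postl}) and argues by contrapositive --- if fewer than $n/m$ voters prefer $c$ over $c^*$, then since $\apv{m}(c^*) = n$ the coalition $\setbuild{v}{c^* \succ_v c}$ $m$-blocks $c$ with witness $\set{c^*}$ --- whereas your chain $nD \le 2\sum_v x_{v,c^*}\,d(v,c^*) \le 2m\,\cost^+_d(c^*)$, $\cost^+_d(w) \le \cost^+_d(c^*) + BD$, and $B/n \le (m-1)/m$ from the unblocked singleton witness is in effect a self-contained re-derivation of the needed special case of that lemma via the perfect matching of $\domg{m}(w)$; I verified each step, including that the perfect matching saturates all $n$ copies of $c^*$ and that $x_{v,c^*} \le m$. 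What your version buys is independence from the cited lemma at no quantitative loss. Your direct two-cluster tightness instance for $k=m$ is also correct (the singleton constraint $|A|/n = (m-1)/m$ is non-strict, so $w \in \avc{m}$, and the exhibited metric attains ratio exactly $2m-1$); the paper goes the other way around, handling $k=m$ by the inclusion $\avc{m-1} \subseteq \avc{m}$ together with $2(m-1)+1 = 2m-1$, so that a single family of instances covers all $k$.

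The genuine gap is the matching lower bound for $k<m$, which is the entire content of the paper's \cref{lem:util:lower-bound} and which you only gesture at (``reverse-engineer the equality conditions''). Your diagnosis that naive two-cluster profiles are obstructed is right, but the resolution is much simpler than the program you outline and needs no equality-condition analysis. The paper takes $n = k+1$ voters and candidates $A \cup B$ with $A = \set{c_1, \ldots, c_{k+1}}$, where every voter ranks all of $A$ above all of $B$ and voter $v_i$ places $c_i$ last \emph{within} $A$. Then $\kapv(c_i) = k$ for every $c_i \in A$ and $\kapv(b) = 0$ for $b \in B$, and \emph{every} $c_i$ lies in $\avc{k}$ by \cref{thm:equivalence}: since $c_i \succeq_{v_j} c_j$ for all $i,j$, matching the $k$ copies of $v_j$ to the $k$ copies of $c_j$ is a perfect matching of $\domg{k}(c_i)$. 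A consistent metric then makes $c_{k+1}$ cheap (distance $2\delta$ to $v_1, \ldots, v_k$ and $1+\delta$ to $v_{k+1}$) while $c_1$ costs at least $2k+1$ (distance $1$ from $v_{k+1}$, $2+\delta$ from $v_1$, and $2$ from the rest), so $\dist^+_\elec(c_1) \ge (2k+1+\delta)/(1+2n\delta) \to 2k+1$ as $\delta \to 0$; with the upper bound this gives the claimed equality on a single fixed election. Note how this sidesteps exactly the tension you flagged: the near-optimal candidate $c_{k+1}$ carries high approval, yet $c_1$ survives because the core membership is certified directly by the matching rather than by keeping witness-set approval scores small, and the ``far majority'' consists of $k$ of the $k+1$ voters, realizing the $\frac{k}{k+1}$-fraction you predicted. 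Without some such construction, your proof of tightness is incomplete for every $k < m$ (and hence, via the paper's reduction, would also have left $k=m$ open had you not supplied a separate instance there).
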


We begin by proving the upper bound. 
For $k < m$, as mentioned above, we can use the following result by \citet{learning_augmented}:

\begin{lemma}[\citet{learning_augmented}, Corollary 3.12]
  \label{lem:pq-core-distortion}
	The utilitarian metric distortion of every candidate in the \pq-veto core is at most 
	\[1 + \frac{2 \max_v p(v)}{\min_c \frac{q(c)}{\plu(c)}}.\]
\end{lemma}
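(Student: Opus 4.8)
The plan is to fix an arbitrary metric $d \cons \prof$ together with a cost-minimizing candidate $c^*$, and to control the single quantity $r = d(c^*, w)$; once $r$ is bounded relative to $\cost^+_d(c^*)$, a triangle inequality converts it into the claimed bound on $\cost^+_d(w)$, and taking the supremum over $d$ gives $\dist^+_\elec(w)$. I would drive the argument directly from the core membership of $w$: by \cref{def:pq-veto-core}, no coalition \pq-blocks $w$, so for every coalition $T$ and every candidate set $S$ with $S \succ_T w$ we have the non-blocking inequality $\p{T} + \q{S} \le 1$. (Equivalently, one may run the argument through the matching certificate of the general version of \cref{thm:equivalence} in the generalized domination graph of \cref{def:domination_graphs}; the two viewpoints are dual.)

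The heart of the proof is to manufacture, at a suitable radius, a witness pair $(T,S)$ centered at $c^*$ and to read distance information off the non-blocking inequality. For a threshold $\rho$ I would take $T$ to be the voters within distance $\rho$ of $c^*$ and $S$ the candidates within a comparable distance of $c^*$. Two elementary geometric facts do the work. First (solid preference), a voter close to $c^*$ ranks every candidate that is close to $c^*$ strictly above the far-away winner $w$, so $S \succ_T w$ holds and the non-blocking inequality applies. Second (the plurality injection), every voter $v$ satisfies $d(c^*, \top(v)) \le 2\,d(v, c^*)$, since $\top(v)$ is $v$'s nearest candidate and therefore at most $d(v,c^*)$ from $v$; hence a voter close to $c^*$ has its top choice inside $S$, which shows that $\plu(S) = \sum_{c \in S}\plu(c)$ is at least the number of such near voters. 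Combining this with the candidate-wise estimate $\q{c} \ge \bigl(\min_{c'} q(c')/\plu(c')\bigr)\plu(c)$ yields $\q{S} \ge \bigl(\min_c q(c)/\plu(c)\bigr)\cdot(\text{number of voters near } c^*)$, and this is exactly the step that injects the factor $\min_c q(c)/\plu(c)$, with the correct monotonicity (the bound worsens precisely when some candidate has small matching mass relative to its plurality).

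Feeding both estimates into $\p{T} + \q{S} \le 1$ limits how much voter mass can cluster near $c^*$; equivalently, a controlled amount of mass must sit far from $c^*$, at distance on the order of $r$. I would convert this $p$-weighted mass statement into a lower bound on the voter count using $\max_v p(v)$ (no single voter carries more than $\max_v p(v)$ of the total mass), and then lower-bound $\cost^+_d(c^*)$ by summing the contributions of these far voters. A single threshold $\rho$ already shows that few voters concentrate near $c^*$, but it is too lossy to land the exact constant; the clean route is to integrate the non-blocking inequality over all radii $\rho \in (0,r)$, i.e., a layered comparison of the distance profiles of $w$ and $c^*$, from which the factor $\frac{2\max_v p(v)}{\min_c q(c)/\plu(c)}$ emerges (the $2$ tracing back to the two triangle-inequality doublings, in the plurality injection and in the witness construction). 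I expect the main obstacle to be quantitative rather than conceptual: choosing the radii and bookkeeping the $p$-weighted quantities against the unweighted voter counts tightly enough to avoid a spurious multiplicative loss. A convenient feature is that candidates with $\plu(c) = 0$ can only increase $\q{S}$ and do not affect the minimum, so they drop out of the accounting and the bound stays finite.
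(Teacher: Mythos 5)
Your overall pipeline --- bound everything through the single quantity $r = d(w,c^*)$, i.e., first show $n\,d(w,c^*) \le \frac{2\max_v p(v)}{\min_c q(c)/\plu(c)} \cdot \cost^+_d(c^*)$ and then sum $d(v,w) \le d(v,c^*) + r$ over voters --- cannot prove this lemma, because the intermediate inequality is simply false for candidates in the \pq-veto core. Concretely, take $k=1$ with uniform weights, so the claimed bound is $3$ and your pipeline needs $n\,d(w,c^*) \le 2\,\cost^+_d(c^*)$. Consider four points $c^*, w, x_1, x_2$ with $d(c^*,w)=r$, $d(c^*,x_i)=r/3$, $d(x_i,w)=d(x_1,x_2)=2r/3$ (all triangle inequalities hold, with equality on $c^*, x_i, w$), candidates $t_i$ located at $x_i$, and $n/2$ voters at each $x_i$ with rankings $t_1 \succ c^* \succ w \succ t_2$ (cluster 1) and $t_2 \succ c^* \succ w \succ t_1$ (cluster 2), which are consistent with the metric. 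Then $w \in \avc{1}$: any coalition meeting both clusters has witness set contained in $\set{c^*}$ with $q(c^*)=0$, and a single cluster has $p(T) = 1/2 \not> 1 - q(\set{t_i,c^*}) = 1/2$ (equivalently, $\domg{1}(w)$ has the perfect matching sending each cluster to the other cluster's top). Yet $\cost^+_d(c^*) = nr/3$, so $n\,d(w,c^*) = 3\,\cost^+_d(c^*) > 2\,\cost^+_d(c^*)$, and your pipeline can certify at best distortion $1+3=4$, not $3$. (The instance itself has distortion $2$; the slack is lost exactly in the step $d(v,w) \le d(v,c^*)+r$, which is far from tight for the voters that matter.) The ball-growing execution compounds this: your solid-preference step needs $T$ within radius $\rho$, $S$ within radius $2\rho$ (forced by the plurality injection), and $3\rho < r - \rho$, capping $\rho < r/4$, and integrating the non-blocking inequality over radii cannot recover the loss since the per-radius constraint is saturable at all radii simultaneously.

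The correct proofs do not pass through $d(w,c^*)$ at all: they are per-voter charging arguments built on the matching certificate. This is what \citet{learning_augmented} do, and what the paper's self-contained proof in \cref{apx:flow} (\cref{thm:alternative}) does for the uniform case via the flow lemma of \citet{DistortionDuality}: flow from $(v,w)$ is routed as $(v,w) \to (v, M_i(v)) \to (N_i(v), M_i(v)) \to (N_i(v), c^*)$, which amounts to the chain $d(v,w) \le d(v,c^*) + 2\,d(N_i(v),c^*)$ where $N_i(v)$ is a voter ranking $M_i(v)$ in its top $k$; the factor $\max_v p(v) / \min_c (q(c)/\plu(c))$ arises from bounding how many times each voter serves as such a partner, not from a radius argument. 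Your geometric ingredients are sound as far as they go --- the plurality injection $d(c^*,\top(v)) \le 2d(v,c^*)$ is essentially \cref{lem:berger}, and your $(T,S)$-at-radius construction is exactly the technique the paper uses in \cref{thm:perc:kavc} --- but note that there it yields the constant $5$ for \emph{percentile} distortion, where only one quantile matters. For the utilitarian bound with the stated constant, the global-radius reduction is the wrong tool, and no amount of bookkeeping over $\rho$ will fix it.
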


As discussed previously, the $k$-approval veto core is the special case  ${\p{v} = \frac{1}{n}}$ for all ${v \in \V}$ and ${\q{c} = \frac{\kapv(c)}{nk}}$ for all ${c \in \C}$;
here, the bound of \cref{lem:pq-core-distortion} reduces to $1 + 2k \max_c \frac{\plu(c)}{\kapv(c)}$.
Since $\plu(c) \leq \kapv(c)$ for all ${c \in \C}$, this implies an upper bound of $2k+1$ for the~$k$-approval veto core, matching the claimed upper bound in \cref{thm:util} for $k < m$.  
While the proof follows easily from \cref{lem:pq-core-distortion}, the latter has a rather involved proof. 
In \cref{apx:flow}, we give a simpler and self-contained proof of the bound of $2k+1$ using the flow technique of \citet{DistortionDuality}.

To give an improved upper bound of $2m-1$ for $k=m$ (i.e., for the proportional veto core), we~utilize the~following lemma of \citet{anshelevich:bhardwaj:postl}: 

\begin{lemma}[\citet{anshelevich:bhardwaj:postl}, Lemma 6]
  \label{lem:anshelevich:util}
	For every pair of candidates $w \neq c^*$ and for all metrics $d$ consistent with the rankings,
	$$\frac{\cost^+_d(w)}{\cost^+_d(c^*)} \leq \frac{2n}{\card{\setbuild{v}{w \succ_v c^*}}} - 1.$$
\end{lemma}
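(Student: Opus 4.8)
The plan is to bound $\cost^+_d(w)$ by partitioning the voters into those who prefer $w$ and those who prefer $c^*$, and then treating the two groups asymmetrically so as to avoid the lossy step of applying the triangle inequality uniformly to everyone. I would write $A = \setbuild{v}{w \succ_v c^*}$ for the set of voters who strictly prefer $w$, and let $B = \comp{A}$ be the remaining voters (those with $c^* \succeq_v w$); set $a = \card{A}$, so $\card{B} = n - a$. I may assume $a \ge 1$, since if $a = 0$ the claimed right-hand side is infinite and the bound holds vacuously.

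First I would control the cost of each group against the cost of $c^*$. For $v \in A$, consistency $d \cons \prof$ gives $d(v,w) \le d(v, c^*)$ directly. For $v \in B$, I would instead invoke the triangle inequality $d(v,w) \le d(v,c^*) + d(c^*, w)$. Summing over all voters yields
\[
\cost^+_d(w) \;\le\; \cost^+_d(c^*) + (n - a)\, d(c^*, w),
\]
so the problem reduces to bounding the single distance $d(c^*, w)$.

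The key step is to estimate $d(c^*,w)$ using only the voters in $A$, for whom $w$ is at least as good as $c^*$. For any $v \in A$, the triangle inequality together with symmetry and consistency gives $d(c^*, w) \le d(c^*, v) + d(v, w) \le 2\,d(v, c^*)$. Averaging this over all $a$ voters in $A$ yields $d(c^*, w) \le \frac{2}{a} \sum_{v \in A} d(v, c^*)$. Writing $C_A = \sum_{v\in A} d(v,c^*)$ and $C_B = \sum_{v \in B} d(v,c^*)$, so that $\cost^+_d(c^*) = C_A + C_B$, substitution gives $\cost^+_d(w) \le C_A + C_B + \frac{2(n-a)}{a}\,C_A$.

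Finally I would compare this against the target $\left(\frac{2n}{a} - 1\right)(C_A + C_B) = \left(1 + \frac{2(n-a)}{a}\right)(C_A + C_B)$, using $2n - a = a + 2(n-a)$. Since the coefficient $1 + \frac{2(n-a)}{a} \ge 1$ and $C_B \ge 0$, the contribution $C_B$ in my upper bound is dominated by $\left(1 + \frac{2(n-a)}{a}\right)C_B$ in the target, which finishes the proof. The one real subtlety — and the step where a naive attempt loses a constant factor — is resisting the urge to apply the triangle inequality to \emph{every} voter: the voters in $A$ must be handled by the direct consistency bound, and $d(c^*,w)$ must be estimated from the $A$-voters alone, so that the nonnegative contribution $C_B$ of the $B$-voters to $\cost^+_d(c^*)$ only helps the inequality rather than weakening it.
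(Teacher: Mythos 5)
Your proof is correct, and it is essentially the standard argument of \citet{anshelevich:bhardwaj:postl} for their Lemma~6 (which this paper imports without reproving): bound $d(w,c^*) \le \frac{2}{\card{A}}\sum_{v \in A} d(v,c^*)$ using only the voters $A$ who prefer $w$, handle those voters by consistency and the rest by the triangle inequality, and absorb $C_B$ into the slack. Your handling of the edge case $\card{A}=0$ and the observation about where the $-1$ (rather than $+1$) comes from are both accurate.
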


Using this lemma, we prove the upper bound by contrapositive. 
For any candidate $c$ such that ${\card{\setbuild{v}{c \succ_v c^*}} \geq n/m}$, \cref{lem:anshelevich:util} directly implies a distortion of at most $2m-1$.
Therefore, consider a candidate $c$ such that fewer than $n/m$ voters prefer $c$ over $c^*$, so strictly more than $n-(n/m)$ voters prefer $c^*$ over $c$.
	Since $\apv{m}(c^*)=n$, this means that the coalition of all voters $v$ with $c^* \succ_v c$ $m$-blocks $c$ with witness set $\set{c^*}$.
Hence, $c$ is not in $\avc{m}$.

We now complete the proof of \cref{thm:util} by giving matching lower bounds for all $k$ in \cref{lem:util:lower-bound}.

\begin{lemma} \label{lem:util:lower-bound}
   For all $k$ and $\epsilon > 0$, there is an election $\elec$ with $\dist^+_\elec(c) \geq 2 \min(k+1, m) - 1 - \epsilon$ for some $c \in \avc[\elec]{k}$.
\end{lemma}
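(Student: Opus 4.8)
The plan is to construct, for each $k$, an election together with a near-worst-case metric that forces a candidate in the $k$-approval veto core to have utilitarian distortion approaching $2\min(k+1,m)-1$. Since the upper bound splits at $k=m$, I expect the lower-bound construction to split similarly: one family of instances targeting the bound $2(k+1)-1 = 2k+1$ for $k<m$, and a separate family targeting $2m-1$ for $k=m$ (the two formulas agree at the boundary $k+1=m$, so with care a single construction might cover both, but I would first build them separately).

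\medskip

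\noindent\textbf{The construction.} The distortion bound $2N-1$ is exactly the value produced by \cref{lem:anshelevich:util} (and by the simplified flow bound of $2k+1$) when a candidate $c$ is preferred over the optimal candidate $c^*$ by only a $1/N$ fraction of voters, where $N = \min(k+1,m)$. So the first step is to design an election where the winner $c$ beats $c^*$ on an arbitrarily small majority just above the threshold needed to keep $c$ in $\avc{k}$: I want essentially a $1/N$ fraction of voters ranking $c$ at the very top and a $1-1/N$ fraction ranking $c^*$ above $c$. The key constraint is membership in the core: by \cref{thm:equivalence} I must ensure $c \in \avc{k}$, equivalently that no coalition $k$-blocks $c$, equivalently that $\domg{k}(c)$ has a perfect matching. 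I would engineer the rankings of the $1-1/N$ fraction of ``anti-$c$'' voters so that $c$ receives enough $k$-approval score (and so that other candidates can absorb the veto votes) to survive; concretely, I would place $c$ reasonably high (within the top $k$) in enough of those voters' ballots, or introduce auxiliary candidates that the anti-$c$ voters rank below $c$, so that the counting inequality $\card{T}/n > 1 - \kapv(S)/nk$ is never met for any witness set $S \succ_T c$.

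\medskip

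\noindent\textbf{The metric.} Given such an election, the second step is to exhibit a consistent metric $d \cons \prof$ achieving the ratio. The standard gadget is to place the $1/N$ fraction of ``pro-$c$'' voters at distance $0$ from $c$ and distance $D$ from $c^*$, and the remaining $1-1/N$ fraction at distance $D$ from $c$ but distance $0$ (or $\epsilon$) from $c^*$, with $D$ chosen so the triangle inequality holds with equality along the tight paths. Then $\cost^+_d(c) \approx (1-1/N)\cdot n D$ while $\cost^+_d(c^*) \approx (1/N)\cdot n D$, giving a ratio approaching $(N-1)/1$; to get the full $2N-1$ rather than $N-1$ I would instead use the classic two-sided configuration where voters preferring $c^*$ are themselves split so that moving from $c$ to $c^*$ travels distance $2D$ through them, which is precisely the configuration extremizing \cref{lem:anshelevich:util}. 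I would verify consistency of $d$ with the submitted rankings directly, and compute the ratio, letting the small majority shrink to the threshold to push the ratio to $2N-1-\epsilon$.

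\medskip

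\noindent\textbf{Main obstacle.} The hard part will be simultaneously satisfying the two competing requirements: the metric wants almost all voters to rank $c^*$ far above $c$ (to make $c$ look terrible), while core membership wants $c$ to have enough approval support and enough veto-absorbing structure to survive $k$-approval veto. For the $k<m$ case I must verify that the $k$-approval scores distribute so that $\domg{k}(c)$ admits a perfect matching despite $c$ being ranked low by a $1-1/(k+1)$ fraction; I expect to need carefully chosen filler candidates that the anti-$c$ voters rank at the bottom, so their $nk$ veto votes are consumed on fillers and on $c^*$ rather than on $c$. For the $k=m$ case, core membership reduces (as used in the upper-bound proof) to no coalition $m$-blocking $c$ with any witness, and since $\apv{m}(S)=n\card{S}$ the threshold is purely about fractions of voters, making the construction cleaner. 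I would present the $k=m$ instance first as the transparent case, then generalize the filler-candidate idea to handle $k<m$, and finally confirm both families yield the tight value $2\min(k+1,m)-1-\epsilon$.
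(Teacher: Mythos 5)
Your overall strategy coincides with the paper's: realize the extremal configuration of \cref{lem:anshelevich:util} by arranging that only a $1/\min(k+1,m)$ fraction of voters prefer the chosen core candidate $c$ to $c^*$, pad to $m$ candidates with fillers ranked at the bottom by everyone, and certify $c \in \avc{k}$ via the perfect-matching characterization in \cref{thm:equivalence}. However, the crux of the lemma --- an explicit election in which the core-membership counting actually works out --- is precisely what you defer to ``carefully chosen filler candidates,'' and your stated mechanism for the fillers is wrong: a filler $b$ has $\kapv(b) = 0$, so under \kapvveto it is eliminated the instant a veto vote reaches it and absorbs nothing; all $nk$ veto votes must be absorbed by candidates with \emph{positive} approval score. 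The paper supplies this positive-score veto-absorbing structure concretely: $n = k+1$ voters, a set $A = \set{c_1, \ldots, c_{k+1}}$ with $A \succ_V B$ (fillers $B$) and $\bot_A(v_i) = c_i$, so that every $a \in A$ has $\kapv(a) = k$, the scores sum to exactly $nk$, and $\domg{k}(c_i)$ has the trivial perfect matching pairing the $k$ copies of $v_j$ with the $k$ copies of $c_j$; the metric then places the single ``middle'' voter $v_{k+1}$ at distance roughly $1$ from both $c_1$ and $c^* = c_{k+1}$, and the other $k$ voters at distance roughly $0$ from $c^*$ and $2$ from $c_1$, giving ratio tending to $2k+1$. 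For what it is worth, the ballot structure you sketch (anti-$c$ voters ranking $c^*$ first and $c$ within their top $k$; pro-$c$ voters ranking $c$ first with $c^*$ outside their top $k$, so that $\kapv(c^*) = nk/(k+1)$) can also be made to satisfy Hall's condition and would work, but as written the verification is not carried out, and following the filler-absorption idea literally would lead you astray.

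A second, smaller point: your two-family split is unnecessary. Since $\avc{m-1} \subseteq \avc{m}$ (as noted after \cref{thm:equivalence}) and $2(m-1)+1 = 2m-1$, the $k = m$ case follows directly from the $k = m-1$ instance; this is how the paper handles the proportional veto core without a separate construction, and it is cleaner than building and verifying a dedicated $k=m$ family first.
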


\begin{proof}
	As discussed previously, $\avc[\elec]{m-1} \subseteq \avc[\elec]{m}$ for all elections $\elec$, and the claimed lower bound reduces to $2k+1$ for $k < m$. 
	Hence, if the lemma holds for all $k < m$, then it also holds for $k = m$ as $2(m-1)+1=2m-1$.
	Thus, we fix a $k < m$ without loss of generality.
	
	Let \election be an election with $\V = \set{v_1, \ldots, v_{k+1}}$ and $\C = A \cup B$ where $A = \set{c_1, \ldots, c_{k+1}}$ and $B = \set{c_{k+2}, \ldots, c_m}$. 
  	The rankings $\prof$ are such that $A \succ_V B$ and $\bot_{A}(v_i) = c_i$ for each voter $v_i \in V$.
  	Note that $\kapv(a) = k$ for all $a \in A$; and $\kapv(b) = 0$ for all $b \in B$.
  	We~show that $A \subseteq \avc{k}$ by using the characterization of $\avc{k}$ in terms of perfect matchings of $\domg{k}(c)$ (\cref{thm:equivalence}).
  	For every $i, j \leq k+1$, notice that $c_i \succeq_{v_j} c_j$.
  	Thus, $\domg{k}(c_i)$ has the perfect matching in which the~$k$~copies of $v_j$ are matched with the~$k$~copies of $c_j$, for all $i, j \leq k+1$.
  	 
 	We will now construct a metric $d$ under which the candidates $c_i$ for $i < k+1$ (particularly~$c_1$) have much higher cost than $c_{k+1}$.
	Under the metric, we have ${d(v_{k+1}, c_{k+1}) = 1+\delta}$ and $d(v_i, c_{k+1}) = 2\delta$ for all $i \neq k+1$. 
	For all $i < k+1$, we have $d(v_{k+1},c_i) = 1$, $d(v_i, c_i) = 2+\delta$, and $d(v_j, c_i) = 2$ for $j \notin \set{i, k+1}$.\footnote{To avoid relying on tie breaking, as with all our other lower bounds, one could slightly perturb distances.} 
	It~is~easy to verify that these distances form a metric consistent with $\prof$.
	The cost of $c_{k+1}$ is at~most $1+2n\delta$, while the cost of all $c_i$ with $i < k+1$ (note that at least $i=1$ satisfies this inequality for all $k$) have cost at least $2k+1$. With $\delta \to 0$, the distortion, lower-bounded by $\frac{2k+1}{1+2n\delta}$ will get arbitrarily close to $2k+1$, completing the proof.
\end{proof}

\subsection*{$\alpha$-Percentile Metric Distortion}

In this section, we show that the $\alpha$-percentile distortion of every candidate in the $k$-approval veto core is 
at most 5 for ${\alpha \ge k/(k+1)}$ (\cref{thm:perc:kavc}),
and unbounded for ${\alpha < k/(k+1)}$ (\cref{thm:perc:unbounded}).
In particular, \pluveto has $\alpha$-percentile distortion at most 5 for all ${\alpha \ge \sfrac{1}{2}}$, 
which we show to be the~best possible bound for any (deterministic) voting rule (\cref{thm:perc:deterministic}).     
Previously, \citet{anshelevich:bhardwaj:elkind:postl:skowron} had shown that the~$\alpha$-percentile distortion of every voting rule is unbounded for ${\alpha \in [0, \sfrac{1}{2})}$, at~least~5 for ${\alpha \in [\sfrac{1}{2}, \sfrac{2}{3})}$, and at~least~3 for ${\alpha \in [\sfrac{2}{3}, 1)}$.
Thus, we improve their lower bound from 3 to 5 for ${\alpha \in [\sfrac{2}{3}, 1)}$.
This establishes that \pluveto is an~optimal voting rule in terms of $\alpha$-percentile distortion for all~$\alpha$.
The only other rule that is known to achieve constant $\alpha$-percentile distortion is the Copeland rule, which enjoys the same bound of $5$ for all $\alpha \ge \sfrac{1}{2}$.
Our analysis uses the~following two lemmas:

\begin{lemma}[\citet{anshelevich:bhardwaj:elkind:postl:skowron}, Lemma 29]
\label{lem:anshelevich:perc}
   For every $\alpha \in [0, 1)$, and pair of candidates $c$~and~$c'$, 
   $\cost^\alpha_d (c) \leq \cost^\alpha_d (c') + d(c, c')$.
\end{lemma}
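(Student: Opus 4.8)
The plan is to exploit the triangle inequality together with a simple counting argument on the ranked voter-to-candidate distances; no consistency with $\prof$ or structure beyond the metric axioms is needed, so the statement should fall out for arbitrary $d$ and arbitrary $c, c'$. Write $t = \floor{\alpha n + 1}$, so that by definition $\cost^\alpha_d(c') = d(\per^\alpha_d(c'), c')$ is the distance from $c'$ to its $t\th$ closest voter; call this value $r$. The first thing I would record is that at least $t$ voters lie within distance $r$ of $c'$ --- namely the $t$ voters realizing the $t$ smallest distances to $c'$ (the same set used to define $\per^\alpha_d(c')$). Denote this set by $U$, so $\card{U} \ge t$ and $d(v, c') \le r$ for every $v \in U$.

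Next, I would transport this radius bound from $c'$ to $c$. For each $v \in U$, the triangle inequality gives $d(v, c) \le d(v, c') + d(c', c) \le r + d(c, c')$. Hence all $t$ voters of $U$ lie within distance $r + d(c, c')$ of $c$.

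Finally, I would invoke the definition of the $\alpha$-percentile cost at $c$. Since at least $t$ voters lie within distance $r + d(c, c')$ of $c$, the $t\th$ closest voter to $c$ can be no farther than $r + d(c, c')$, i.e. $\cost^\alpha_d(c) = d(\per^\alpha_d(c), c) \le r + d(c, c') = \cost^\alpha_d(c') + d(c, c')$, which is exactly the claim.

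The only mildly delicate point --- and the step I would state most carefully --- is this last inference: that having $t$ voters within radius $R$ forces the $t\th$ smallest voter-distance to be at most $R$. This is just the observation that the $t\th$ order statistic of a multiset is at most any value that is exceeded (weakly) by at least $t$ of its elements, but it is worth spelling out so that the handling of the floor function and of distance ties is unambiguous. Everything else is a direct application of the triangle inequality, so I expect no substantive obstacle.
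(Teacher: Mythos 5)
Your proof is correct: the paper itself gives no proof of this lemma (it is imported verbatim from \citet{anshelevich:bhardwaj:elkind:postl:skowron}), and your triangle-inequality-plus-order-statistic argument is exactly the standard proof given in that source. Your observations that consistency with $\prof$ plays no role and that the $t\th$ order statistic is bounded by any radius covering at least $t = \floor{\alpha n + 1}$ voters are both accurate and dispose of the only delicate points.
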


\begin{lemma}[\citet{learning_augmented}, Lemma 3.9]
\label{lem:berger}
For every pair of voters $v$ and $v'$, and every pair of candidates $c$ and $c'$, if~${c \succeq_v \top(v')}$, then for every metric consistent with the rankings,
\[d(v, c') + d(v', c') \ge \frac{d(c, c')}{2}.\]
\end{lemma}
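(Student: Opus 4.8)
The plan is to prove the inequality through a short chain of triangle inequalities anchored at the top choice $t := \top(v')$ of voter $v'$, together with exactly two applications of metric consistency. The hypothesis $c \succeq_v t$ gives, by consistency, $d(v,c) \le d(v,t)$; and since $t$ is the top choice of $v'$ we have $t \succeq_{v'} c'$, so consistency also yields $d(v',t) \le d(v',c')$. These are the only two places where the preference structure enters the argument; everything else is purely metric reasoning via the triangle inequality and symmetry.

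First I would bound the distance from $c'$ to the anchor $t$. Routing through $v'$ gives $d(c',t) \le d(c',v') + d(v',t)$, and substituting the consistency bound $d(v',t) \le d(v',c')$ shows $d(c',t) \le 2\,d(v',c')$. This is the step that ``absorbs'' the top-choice hypothesis for $v'$ into the right-hand side of the target inequality.

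Next I would bound $d(c,c')$ by routing first through $v$ and then through $c'$. The triangle inequality gives $d(c,c') \le d(v,c) + d(v,c')$; replacing $d(v,c)$ by the larger quantity $d(v,t)$ (using the first consistency bound) and then expanding $d(v,t) \le d(v,c') + d(c',t)$ yields $d(c,c') \le 2\,d(v,c') + d(c',t)$. Plugging in $d(c',t) \le 2\,d(v',c')$ from the previous step produces $d(c,c') \le 2\,d(v,c') + 2\,d(v',c') = 2\bigl(d(v,c') + d(v',c')\bigr)$, which rearranges directly to the claim.

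The main obstacle is purely combinatorial: choosing the correct intermediate point and the order in which to apply the triangle inequality. The factor of $2$ on the right-hand side appears precisely because the chain visits $c'$ twice (once directly, once via $t$), so one must route through $t = \top(v')$ rather than attempting a single direct estimate between $c$ and $c'$. Once this anchoring is fixed, the two consistency conditions chain together cleanly and no case analysis is required.
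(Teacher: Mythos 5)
Your proof is correct. The paper itself does not prove this lemma---it imports it verbatim from \citet{learning_augmented} (their Lemma 3.9)---so there is no in-paper argument to compare against; your derivation, anchoring at $t = \top(v')$ and using exactly the two consistency facts $d(v,c) \le d(v,t)$ (from $c \succeq_v t$) and $d(v',t) \le d(v',c')$ (from $t \succeq_{v'} c'$) together with triangle inequalities, is the standard and essentially canonical route to this bound, and every step checks out, including the weak-preference edge cases $c = t$ and $c' = t$. Note also that your argument uses only symmetry and the triangle inequality, never positive definiteness, so it is consistent with the paper's remark that its results hold for pseudo-metrics.
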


\begin{theorem} \label{thm:perc:kavc}
  The $\alpha$-percentile distortion of every candidate in the $k$-approval veto core is at most~$5$ for all $\alpha \geq \frac{k}{k+1}$.
\end{theorem}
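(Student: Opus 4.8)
The plan is to control the single quantity $d(w, c^*_d)$, the distance from a core candidate $w \in \avc{k}$ to the $\alpha$-percentile-optimal candidate, and then add one unit of cost via \cref{lem:anshelevich:perc}. Fix any metric $d \cons \prof$, write $c^* = c^*_d$ and $r = \cost^\alpha_d(c^*)$, and let $N = \setbuild{v \in \V}{d(v, c^*) \le r}$ be the set of voters within the optimal percentile radius. Since $r$ is the distance of the $\floor{\alpha n + 1}\th$ closest voter to $c^*$, the set $N$ contains at least $\floor{\alpha n + 1}$ voters, so $\card{N}/n > \alpha \ge \frac{k}{k+1}$. If I can show $d(w, c^*) \le 4r$, then \cref{lem:anshelevich:perc} (with candidates $w$ and $c^*$) gives $\cost^\alpha_d(w) \le \cost^\alpha_d(c^*) + d(w, c^*) \le 5r$, which is exactly the claim.

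The engine for bounding $d(w, c^*)$ is \cref{lem:berger}: if I can find two voters $v, v' \in N$ with $w \succeq_v \top(v')$, then applying the lemma with the candidate pair $w, c^*$ yields $d(v, c^*) + d(v', c^*) \ge d(w, c^*)/2$. Because both voters lie in $N$, each of the two left-hand distances is at most $r$, giving $d(w, c^*) \le 2(r + r) = 4r$ as required. So the entire theorem reduces to the purely combinatorial statement that such a pair of voters exists inside the large set $N$.

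Establishing the existence of this pair is where I expect the main difficulty, since it is the step that translates the metric geometry into the veto-core structure, and I would do it by contradiction. Suppose no such pair exists, i.e., $\top(v') \succ_v w$ for all $v, v' \in N$; setting $S = \setbuild{\top(v')}{v' \in N}$, this says precisely $S \succ_N w$. The key quantitative observation is $\kapv(S) \ge \card{N}$, because every voter in $N$ has its own top choice lying in $S$ and thus contributes to the $k$-approval score of some candidate of $S$. Combining $\card{N}/n > \frac{k}{k+1}$ with $\kapv(S) \ge \card{N}$ then drives the blocking inequality $\card{N}/n > 1 - \kapv(S)/(nk)$ through; the arithmetic closes exactly because $\frac{k}{k+1}$ is the fixed point of $x = 1 - x/k$, which is where the hypothesis $\alpha \ge \frac{k}{k+1}$ is used. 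Hence $N$ would $k$-block $w$ with witness $S$ (\cref{def:kapv-veto-core}), contradicting $w \in \avc{k}$. The contradiction supplies the desired voters $v, v'$ and completes the argument.
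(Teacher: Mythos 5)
Your proof is correct, and it reaches the paper's bound by a genuinely leaner route built from the same two ingredients. Both arguments run on \cref{lem:anshelevich:perc} and \cref{lem:berger}, and both hinge on the same counting fact: a coalition of more than $nk/(k+1)$ voters close to $c^*$, all of whose top choices beat $w$, is incompatible with $w \in \avc{k}$ --- with $\alpha \ge \frac{k}{k+1}$ entering exactly as you say, via the fixed point of $x = 1 - x/k$. The differences are organizational but real. First, the paper uses the matching characterization from \cref{thm:equivalence}: it takes $V' = \setbuild{v}{d(v,c^*) < d(w,c^*)/4}$, collects the top choices $C'$ of $V'$, and exhibits a Hall violation in $\domg{k}(w)$ (more than $nk/(k+1)$ copies of candidates in $C'$, but fewer than $nk/(k+1)$ voter copies outside $V'$ eligible to be matched to them). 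You instead invoke the blocking characterization (\cref{def:kapv-veto-core}) directly, with coalition $N$ and witness $S = \setbuild{\top(v')}{v' \in N}$, which replaces the copy-counting and Hall's theorem by the one-line estimate $\kapv(S) \ge \card{N}$. Second, the paper splits into two cases according to whether $d(w,c^*) \le \frac{4}{5}\cost^\alpha_d(w)$; by centering your argument at the optimal radius $r = \cost^\alpha_d(c^*)$ you prove the unconditional inequality $d(w,c^*) \le 4r$ and finish with a single application of \cref{lem:anshelevich:perc} --- the paper's second case proves the same inequality in disguise, so your packaging eliminates a redundant case distinction. Two details worth spelling out in a full write-up, both of which your sketch implicitly handles: $\card{N} \ge \floor{\alpha n + 1} > \alpha n \ge \frac{k}{k+1} n$ supplies the \emph{strict} inequality the blocking condition requires; and under your contradiction hypothesis $w \notin S$ automatically (if $\top(v') = w$ for some $v' \in N$, then $w \succeq_v \top(v')$ trivially, producing the desired pair), so the relation $S \succ_N w$ is legitimate.
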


\begin{proof}
	Fix some $k \in \set{1, \ldots, m}$ and $\alpha \geq k / (k+1)$. 
	Let \election be an election, and let~$d \cons \prof$ be a~metric consistent with the rankings. 
	Let $w \in \avc{k}$, and let $c^*$ be an optimal candidate under~$d$.
	We distinguish two cases, based on whether $w$ and $c^*$ are ``close'' to each other (compared to the cost of $w$) or not.
	
	If $d(w,c^*) \leq \frac{4}{5} \cdot \cost^{\alpha}_d(w)$, then using \cref{lem:anshelevich:perc}, we obtain 
	$$\cost^{\alpha}_d(c^*) \geq \cost^{\alpha}_d(w) - d(w,c^*)
	\geq \cost^{\alpha}_d(w) - \frac{4}{5} \cdot \cost^{\alpha}_d(w)
	= \frac{1}{5} \cdot \cost^{\alpha}_d(w).$$   
	
	If $\cost^\alpha_d(w) < \frac{5}{4} \cdot d(w, c^*)$, 
	we show that there are at least $n/(k+1)$ voters $v$ with distance $d(v,c^*) \geq d(w,c^*) / 4$, which implies that $\cost^\alpha_d(c^*) \ge d(w,c^*) / 4$.
  	Assume for contradiction that this is not the case, i.e., the set $V' = \setbuild{v \in V}{d(v,c^*) < d(w,c^*) / 4}$ contains strictly more than $nk / (k+1)$ voters.

	Consider the set of candidates $C' = \setbuild{\top(v)}{v \in V'}$ which are the top choice of at least one voter in~$V'$.
	Then, the total plurality score $P := \sum_{c \in C'} \plu(c) \geq |V'| > nk / (k+1)$, and the bipartite graph $\domg{k}(w)$ contains at least $P$ copies of candidates in $C'$ (possibly more, by considering lower rankings when $k > 1$).
	Furthermore, for any $v, v' \in V'$, we have that $d(v,c^*) + d(v',c^*) < d(w,c^*) / 2$. Then, \cref{lem:berger} implies that no voter $v \in V'$ prefers $w$ over $\top(v')$ for any $v' \in V'$.
	As a result, $\domg{k}(w)$ cannot contain any edges from copies of $v \in V'$ to any copies of candidates $c \in C'$.
	In other words, the only edges to copies of candidates $c \in C'$ can come from copies of voters $v \notin V'$;
	and since there~are strictly fewer than $n/(k+1)$ voters not in $V'$, there~are strictly fewer than $nk/(k+1)$ such copies.
	On the other hand, we argued above that $\domg{k}(w)$ contains at least $P > nk/(k+1)$ copies of candidates in $C'$.
	Thus, we have exhibited a~set of $P$ nodes in $\domg{k}(w)$ whose neighborhood contains strictly fewer than $P$ nodes.
	By Hall's marriage theorem, $\domg{k}(w)$ does not have a perfect matching, contradicting the assumption of the theorem that $w \in \avc{k}$.
\end{proof}

We next show matching lower bounds.
First, we show that for $\alpha < \frac{k}{k+1}$, the $\alpha$-percentile distortion of candidates in the $k$-approval veto core may be unbounded.

\begin{theorem} \label{thm:perc:unbounded}
  For all $\alpha < \frac{k}{k+1}$, there exists an election $\elec$ such that $\dist^\alpha_\elec(w) = \infty$ for all candidates $w \in \avc[\elec]{k}$.
\end{theorem}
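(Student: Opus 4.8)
The plan is to produce, for each $\alpha < \frac{k}{k+1}$, a \emph{single} election whose $k$-approval veto core is a singleton $\{w\}$, and then exhibit one metric witnessing $\dist^\alpha_\elec(w) = \infty$. The guiding idea is to make $w$ an almost-unanimous favorite (so that it is the unique core member) while a rival $c^*$ is loved intensely by a block of just under $\frac{k}{k+1}n$ voters; the $\alpha$-percentile objective with $\alpha < \frac{k}{k+1}$ will then ``see'' those voters as the relevant ones for $c^*$, driving its cost to $0$, but will see a faraway voter for $w$. Concretely, I would take $m = k+1$ candidates $w, c^*, f_1, \dots, f_{k-1}$ and $n$ voters, with $n$ a large multiple of $k+1$ chosen so that $t := \floor{\alpha n + 1}$ satisfies $t < \frac{nk}{k+1}$; this is possible precisely because $\alpha < \frac{k}{k+1}$, the strict gap beating the $+1$ from the floor once $n$ is large. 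I partition the voters into a set $X$ of size $t$, all ranking $c^* \succ w \succ f_1 \succ \dots \succ f_{k-1}$, and its complement $\comp{X}$ of size $n - t$, all ranking $w \succ f_1 \succ \dots \succ f_{k-1} \succ c^*$. A direct count gives $\kapv(w) = n$, $\kapv(c^*) = t$, $\kapv(f_{k-1}) = n - t$, and $\kapv(f_j) = n$ for $j < k-1$. (If a larger candidate set is desired, one can pad with dummy candidates ranked last by everyone; these have $k$-approval score $0$ and are therefore blocked.)

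The first substantive step is to show $\avc[\elec]{k} = \{w\}$ via \cref{def:kapv-veto-core}. Membership of $w$ follows because $w$ is top-ranked throughout $\comp{X}$ and ranked second (below only $c^*$) throughout $X$: any witness $S$ with $S \succ_T w$ forces $T \cap \comp{X} = \emptyset$ (nothing is above $w$ there) and $S = \{c^*\}$, and then the block inequality $\card{T}/n > 1 - \kapv(c^*)/(nk)$ cannot hold since $\card{T} \le t < \frac{nk}{k+1}$. For the reverse inclusion I exhibit explicit blocks: $\comp{X}$ blocks $c^*$ with witness $\comp{\{c^*\}}$, using $\kapv(\comp{\{c^*\}}) = nk - t$ together with $t < \frac{nk}{k+1}$; and the whole electorate $\V$ blocks each $f_j$ with witness $\{w\} \cup \setbuild{f_i}{i < j}$, the set of candidates everyone ranks above $f_j$ (for $k=1$ there are no fillers and this last step is vacuous).

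Finally, I would place the metric on three collinear points $P, Q, R$ with $d(P,Q) = d(Q,R) = 1$ and $d(P,R) = 2$: put $c^*$ and all voters of $X$ at $P$, put $w$ and all fillers at $Q$, and put the voters of $\comp{X}$ at $R$. Consistency with both ranking types is immediate to check (each voter's top candidate is at distance $1$ and $c^*$ is the unique candidate a voter in $X$ can be co-located with). Under this metric every voter lies at distance exactly $1$ from $w$, so $\cost^\alpha_d(w) = 1$; meanwhile the $t$ voters of $X$ sit at distance $0$ from $c^*$, so the $t$-th closest voter to $c^*$ (with $t = \floor{\alpha n + 1}$) is at distance $0$ and $\cost^\alpha_d(c^*) = 0$. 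Hence $\dist^\alpha_\elec(w) \ge \cost^\alpha_d(w)/\cost^\alpha_d(c^*) = \infty$, and since $w$ is the only core member the theorem follows.

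I expect the main obstacle to be the bookkeeping that pins the core down to exactly $\{w\}$ --- in particular verifying that all of $c^*, f_1, \dots, f_{k-1}$ are blocked --- and keeping the strict inequality $t < \frac{nk}{k+1}$ intact against the floor in $t = \floor{\alpha n + 1}$, which is exactly what dictates taking $n$ large. The metric itself is deliberately crude (three points) so that consistency and the two percentile computations are trivial; the only delicate point there is ensuring $w$ is kept at positive distance from \emph{all} voters while $c^*$ is at distance $0$ from $t$ of them, which is what the ranking design ($w$ below $c^*$ exactly for the block $X$, and nothing ranked above $c^*$ within $X$) is engineered to allow.
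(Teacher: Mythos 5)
Your proposal is correct, and its skeleton is the same as the paper's: $k+1$ candidates, a two-block electorate split at the threshold $t=\floor{\alpha n+1}$, exclusion of $c^*$ from the core via exactly the counting $t < nk/(k+1)$ (your coalition $\comp{X}$ with witness $\comp{\set{c^*}}$ is the paper's coalition $T$ with witness $S=\set{c_1,\ldots,c_k}$ under \cref{def:kapv-veto-core}), and a crude $\set{0,1,2}$-valued metric driving $\cost^\alpha_d(c^*)$ to $0$. The one genuine difference is the metric: the paper co-locates the surviving candidates with the anti-$c^*$ block, so its computation $\cost^\alpha_d(c_i)=1$ requires $\ell > n-\ell$, which forces $\alpha \ge 1/2$ (plus $n$ odd and $\alpha n$ non-integral) and leads the paper to dispose of $\alpha<1/2$ by citing the earlier unboundedness result of Anshelevich et al.; your three collinear points keep $w$ and the fillers at distance exactly $1$ from \emph{every} voter, so a single self-contained construction covers the whole range $\alpha < k/(k+1)$, including $\alpha<1/2$ and $k=1$, with no case split and no parity conditions on $n$. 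Two small remarks. First, pinning the core down to exactly $\set{w}$ --- the step you flagged as the main bookkeeping obstacle --- is more than the theorem needs: once $c^*$ is $k$-blocked, every remaining candidate (fillers included) has percentile cost $1$ against $c^*$'s cost $0$ under your metric, which is all that is required, and is how the paper argues (it only shows $\avc{k} \subseteq S$). Second, your count $\kapv(w)=n$ is wrong for $k=1$ (there $\apv{1}(w)=n-t$), but harmlessly so, since the two blocking computations use only $\kapv(c^*)=t$ and $\kapv(\comp{\set{c^*}})=nk-t$, both valid for all $k$; relatedly, the dummy-padding parenthetical would need the dummies placed at a fourth point (e.g., beyond $P$, at distance $\ge 2$ from $R$) to keep metric consistency, though $m=k+1$ already suffices for the statement.
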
 

\begin{proof}
	\citet{anshelevich:bhardwaj:elkind:postl:skowron} already established that the $\alpha$-percentile metric distortion of every (deterministic) voting rule is unbounded for ${\alpha \in [0, 1/2)}$.
	We therefore focus on $\alpha \in [1/2, k/(k+1))$ and thus on $k \in \set{2, \ldots, m}$.
   
   	Let $n$ be an odd number such that $\alpha n$ is not an integer, and $\ceil{\alpha n} < nk/(k+1)$. 
   	Such~an~$n$~exists; in particular, the property $\ceil{\alpha n} < nk/(k+1)$ is satisfied for all sufficiently large $n$, due to the assumption that $\alpha$ is strictly smaller than $k/(k+1)$. 
   	Let $\ell = \ceil{\alpha n} = \floor{\alpha n + 1}$ (by the assumption that $\alpha n$ is not an integer). 
   	Note that since $\alpha \geq 1/2$ and $n$ is odd, we have that $\ell > n-\ell$.

   	Let \election be an election with $k+1$ candidates $c_1, \ldots, c_k, c^*$ and $n$ voters.
   	The voters are partitioned into two sets $T \cup \comp{T} = \V$ of sizes $n-\ell$ and $\ell$.
   	Every voter $v \in T$ has $c^*$ as their bottom choice, and every voter $v \in \comp{T}$ has $c^*$ as their top choice.
   	As~a~result, the $k$-approval score of $S = \set{c_1, \ldots, c_k}$ is $nk - \ell$; the only top-$k$ votes not going to $S$ are the first-place votes of $v \in \comp{T}$ (which go to $c^*$).
   	Then, we have that
   	\[1 - \frac{\kapv(S)}{nk}
		= \frac{\ell}{nk}
		= \frac{\ceil{\alpha n}}{nk} 
		\leq \frac{1}{k+1}
		= \frac{n - \frac{nk}{k+1}}{n}
		< \frac{n-\ceil{\alpha n}}{n}
		= \frac{n-\ell}{n}
		= \frac{\card{T}}{n}.\]	
	Since $S \succ_T c^*$, the coalition $T$ $k$-blocks candidate $c^*$ with witness set $S$, and thus, $c^* \notin \avc[\elec]{k}$.
	Consequently, we~get that $\avc[\elec]{k} \subseteq S$.
	
	\needspace{\baselineskip}
	Now, consider the following metric $d$ consistent with the rankings. 
  	All voters $v \in T$ have distance $d(v, c^*) = 1$ and $d(v, c_i) = 0$ for all $c_i \in S$.
  	All voters $v \in \comp{T}$ have distance $d(v, c^*) = 0$ and $d(v, c_i) = 1$ for all $c_i \in S$.
  	Since $\card{\comp{T}} = \floor{\alpha n + 1}$, we~have $\cost^{\alpha}_d (c^*) = 0$.
  	Since $\card{T} = n-\ell < \ell = \floor{n \alpha + 1}$,
  	we~have $\cost^{\alpha}_d(c_i) = 1$ for all $c_i \in S$.
	Because there exists a~candidate (namely, $c^*$) of cost~0, yet all candidates in $S$ (and hence in $\avc{k}$) have cost 1, this completes the proof.
\end{proof}  

Next, we show that no deterministic voting rule can achieve $\alpha$-percentile distortion smaller than 5, for any $\alpha < 1$.
The proof builds on a straightforward extension of the construction by \citet{anshelevich:bhardwaj:elkind:postl:skowron},  
and is therefore deferred to \cref{apx:tight_lower_bound}.

\begin{theorem}
  \label{thm:perc:deterministic}
  For every (deterministic) voting rule $\alg$, and constants $\alpha \in [1/2,1)$ and $\epsilon > 0$, 
  there~exists an~election~$\elec$ and candidate ${w \in \alg(\elec)}$ such that $\dist^\alpha_\elec(w) \geq 5 - \epsilon$.
\end{theorem}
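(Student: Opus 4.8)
The plan is to take the symmetric hard instance that \citet{anshelevich:bhardwaj:elkind:postl:skowron} use to witness a lower bound of $5$ for $\alpha \in [\sfrac12, \sfrac23)$ and re-tune its block sizes so that the same argument survives for every $\alpha \in [\sfrac12, 1)$. The construction is symmetric under relabeling of the candidates, so I may assume without loss of generality that the deterministic rule $\alg$ outputs a fixed canonical candidate $w$ (otherwise apply the automorphism mapping the actual winner to the canonical one); the whole point of symmetry is that no candidate can be ``globally good,'' so whatever $\alg$ returns is bad under the corresponding metric. For the canonical $w$ I would exhibit an alternative $c^*$ and a metric $d \cons \prof$ realizing the tight ``$5=4+1$'' configuration of \cref{lem:anshelevich:perc}: a block of voters placed at distance $5$ from $w$, a matching block within distance $1$ of $c^*$, and $d(w,c^*)=4$, so that $\cost^\alpha_d(w) \ge 5-\epsilon$ while $\cost^\alpha_d(c^*) \le 1$. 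Because the profile $\prof$ is frozen before $\alg$ chooses, each such voter carries a single fixed ranking; these rankings are chosen so that (under every relabeling) the voters ranking their intended optimum near the top and $w$ near the bottom can indeed be pushed close to $c^*$ and far from $w$ consistently.

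The only role of $\alpha$ is to fix the index $\floor{\alpha n + 1}$ at which the percentile cost is read off, so the extension amounts to making the orbit (block) sizes functions of $\alpha$ rather than the fixed fractions that produce the $\sfrac23$ barrier. Concretely, I would size the ``near'' and ``far'' blocks so that in $d$ strictly fewer than $\floor{\alpha n + 1}$ voters lie within distance $5$ of $w$, forcing $\cost^\alpha_d(w)\ge 5-\epsilon$, while at least $\floor{\alpha n + 1}$ voters lie within distance $1$ of $c^*$, forcing $\cost^\alpha_d(c^*)\le 1$. Since $\alpha<1$ strictly, there is always a nonempty far tail, and for all sufficiently large $n$ (chosen to control the floors and any parity constraints) one can take blocks of size roughly $(1-\alpha)n$ and $\alpha n$ realizing both thresholds; here $\alpha \ge \sfrac12$ is exactly what guarantees that the near-$c^*$ block is large enough to double as the far-from-$w$ block after relabeling, mirroring the counting already used in \cref{thm:perc:kavc}. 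Taking the ratio then yields $\dist^\alpha_\elec(w) \ge 5-\epsilon$ for the output $w$, which is the claim.

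I expect the main obstacle to be consistency under the \emph{full} symmetry rather than for a single winner: a voter's ranking is fixed in $\prof$, yet that same voter must be placeable near the optimum and far from the winner in the metric attached to whichever candidate the relabeling selects, and all these metrics must simultaneously respect the triangle inequality while stretching $d(w,c^*)$ out to $4$. The delicate point is verifying that the $\alpha$-dependent block sizes make the percentile index $\floor{\alpha n + 1}$ fall on the intended (far, resp.\ near) block at the same time for every image of the canonical candidate, including the off-by-one effects of the floor near the block boundaries. This is precisely where the strict inequality $\alpha<1$ and a careful choice of $n$ (relative to $\epsilon$) do the work, and it is the routine-but-tedious bookkeeping that the straightforward extension of the \citet{anshelevich:bhardwaj:elkind:postl:skowron} construction is designed to handle.
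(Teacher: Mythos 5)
Your high-level skeleton does match the paper's proof --- a candidate-symmetric cyclic profile so that by relabeling one may assume the deterministic rule elects a canonical candidate $c_1$, an adversarial consistent metric read off at the $\floor{\alpha n + 1}$\th closest voter, and a small perturbation to handle $\epsilon$ --- but the metric you specify cannot be realized, and repairing it is not the ``routine bookkeeping'' you defer but the actual content of the paper's construction. Here is the obstruction. Suppose, as you propose, that a far block of $\approx (1-\alpha)n$ voters sits at distance $5$ from $w$ while at least $\floor{\alpha n + 1}$ voters sit within distance $1$ of $c^*$; by counting, these sets must intersect. Take any overlap voter $u$ that ranks some candidate $c \neq w$ \emph{below} $w$: consistency forces $d(u,c) \ge d(u,w) \ge 5$, yet if any other voter $u'$ near $c^*$ has $c$ within distance $1$ (in a cyclic profile, for every non-bottom-$w$ voter $u$ some such pair $(c, u')$ exists), the quadrilateral inequality gives $d(u,c) \le d(u,c^*) + d(u',c^*) + d(u',c) \le 3$, a contradiction. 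Hence only the voters ranking $w$ dead last --- a $1/m$ fraction in any symmetric profile --- can occupy the tight $(5,1)$ position; every other voter near $c^*$ is capped at distance $3$ from $w$, and any remaining tail voters must be far from $c^*$ as well. Your literal two-block plan (the whole far block at distance $5$, doubling as part of the near-$c^*$ block) therefore forces $(1-\alpha)n \le n/m$, i.e.\ $m \le 1/(1-\alpha)$; for $\alpha \in [1/2, 2/3)$ this means $m \le 2$, and no two-candidate instance can witness a ratio above $3$ (plurality already achieves $\alpha$-percentile distortion $3$ there). Moreover, if the percentile index ever lands in the distance-$3$ bulk rather than exactly on the $(5,1)$ voter, the ratio collapses to $3$.

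The paper's proof supplies precisely the two ingredients missing from your plan. First, a \emph{four-level} metric $10/5/3/1$ rather than your two-level $5/1$ picture: voter $v_1$ is placed at distance $\approx 10$ from \emph{everything} (so it hurts $c^*$ too, but only as the single worst voter), voter $v_2$ --- the unique voter with bottom choice $c_1$ --- sits at the tight distance $5 = 1 + 1 + 3$ from the winner and $\approx 1$ from $c^* = c_m$, and the bulk sits at $3$ from $c_1$ and $\approx 1$ from $c_m$. Second, the parameter that gets re-tuned as a function of $\alpha$ is the \emph{number of candidates}, not block sizes at fixed distances: choosing $m$ (with $n = m$, or clusters of identical voters) so that $m - \floor{\alpha m} = 2$ pins the percentile index to the \emph{second-worst} voter for every $\alpha \in [1/2, 1)$ simultaneously, so that $c_1$ is judged by $v_2$ at distance $5$ while $c_m$ is judged at distance $1 + (m-1)\delta$, yielding $5 - \epsilon$. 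Coarsely, this tail ($v_1$ and $v_2$, or their clusters) does have total size $\approx (1-\alpha)n$, consistent with your intuition, but the split into one omni-far cluster plus a single bottom-$w$ cluster at $5$, with the bulk pinned at $3$, is forced by the triangle-inequality argument above --- this structural step, not off-by-one floor effects, is what your proposal is missing.
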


\subsection*{Egalitarian Metric Distortion}

Here, we show that all candidates in the $k$-approval veto core, for all~$k$, have egalitarian (metric) distortion at most~3, which is known to be the best possible guarantee \cite{improved_metric_distortion}.
This~is based on the~key observation that very minimal conditions are enough to ensure egalitarian distortion at most~3; 
in particular, all Pareto efficient candidates have egalitarian distortion at most~3.
We~begin by recalling the definition of Pareto domination:

\begin{definition}[Pareto domination]
  \label{def:pareto}
     A candidate $c$ is \emph{Pareto dominated} by a candidate $c'$ if ${c' \succ_v c}$ for every voter $v$. 
     If a candidate $c$ is not Pareto dominated by any candidate $c'$, then $c$ is \emph{Pareto efficient}.
\end{definition} 
 
We first adapt the proof of Theorem~30 of \citet{anshelevich:bhardwaj:elkind:postl:skowron}, and show that if candidate $c$ is not Pareto dominated by candidate~$c'$, the egalitarian distortion of $c$ cannot be much higher than that of~$c'$.

\begin{lemma}
  \label{lem:non-domination}
  Let \election be an election and $d \cons \prof$ be a metric.
  If $c$ and $c'$ are candidates such that $c'$ does not Pareto dominate $c$, then $\cost^1_d(c) \leq 3 \cdot \cost^1_d(c')$.
\end{lemma}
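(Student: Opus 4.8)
The plan is to exploit the definition of Pareto domination to locate a single voter whose preference links the distances to $c$ and $c'$, and then to propagate a bound on $d(c,c')$ to every voter via two applications of the triangle inequality. This is exactly the kind of ``minimal condition'' alluded to before the lemma: we will only use the existence of one witnessing voter, not any richer structure.

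First I would observe that since $c'$ does not Pareto dominate $c$, there must exist at least one voter $v_0$ for whom it is \emph{not} the case that $c' \succ_{v_0} c$. Because each $\succ_{v_0}$ is a total order over $\C$, the negation of $c' \succ_{v_0} c$ is precisely $c \succeq_{v_0} c'$, and consistency of the metric ($d \cons \prof$) then yields $d(v_0, c) \le d(v_0, c')$. Writing $R = \cost^1_d(c') = \max_{v \in \V} d(v, c')$ for the egalitarian cost of $c'$, we see in particular that both $d(v_0, c')$ and $d(v_0, c)$ are at most $R$.

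Next I would bound the distance between the two candidates themselves. Routing through $v_0$, the triangle inequality gives
\[
d(c, c') \le d(c, v_0) + d(v_0, c') = d(v_0, c) + d(v_0, c') \le 2R.
\]
Finally, for an arbitrary voter $v$, routing through $c'$ gives $d(v, c) \le d(v, c') + d(c', c) \le R + 2R = 3R$. Taking the maximum over all $v \in \V$ yields $\cost^1_d(c) \le 3R = 3 \cdot \cost^1_d(c')$, as claimed.

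I do not expect a genuine obstacle here; the only step requiring any care is the translation of ``$c'$ does not Pareto dominate $c$'' into the existence of a single witnessing voter $v_0$ with $c \succeq_{v_0} c'$, after which everything reduces to two triangle-inequality bounds. That translation is clean precisely because the preferences are total orders, so the negation of $c' \succ_{v_0} c$ is exactly $c \succeq_{v_0} c'$, which is what consistency needs.
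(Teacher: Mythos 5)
Your proof is correct, and while it uses the same two ingredients as the paper, it assembles them into a genuinely simpler, case-free argument. The paper distinguishes two cases according to whether $d(c,c') \leq \frac{2}{3} \cdot \cost^1_d(c)$: in the close case it invokes the Lipschitz-type bound $\cost^1_d(c) \leq \cost^1_d(c') + d(c,c')$ (\cref{lem:anshelevich:perc}, applied to the egalitarian objective), and in the far case it uses a witnessing voter $v$ with $c \succ_v c'$ to derive $d(c,c') \leq d(v,c) + d(v,c') \leq 2 d(v,c') \leq 2 \cdot \cost^1_d(c')$ and hence $\cost^1_d(c) \leq \frac{3}{2} \cdot d(c,c') \leq 3 \cdot \cost^1_d(c')$. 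You observe that no threshold is needed: the witness voter gives $d(c,c') \leq 2 \cdot \cost^1_d(c')$ unconditionally, and the pointwise triangle inequality $d(v,c) \leq d(v,c') + d(c,c')$ --- which, maximized over $v$, is exactly the content of \cref{lem:anshelevich:perc} for the egalitarian cost --- then yields $\cost^1_d(c) \leq 3 \cdot \cost^1_d(c')$ in a single chain. What your route buys is self-containedness (you reprove the Lipschitz step inline rather than citing an external lemma) and the elimination of the case analysis; the paper's two-case structure exists mainly to parallel the proof of \cref{thm:perc:kavc}, where the threshold argument is genuinely needed because the analogous distance bound there only holds for a sufficiently large set of voters rather than pointwise. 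One subtlety you handle correctly and should keep explicit: the negation of Pareto domination yields only $c \succeq_{v_0} c'$, which includes the degenerate possibility $c = c'$, while consistency of the metric as defined applies to strict preference; in the equality case the lemma is trivial since $\cost^1_d(c) = \cost^1_d(c')$, and otherwise $c \succ_{v_0} c'$ gives $d(v_0,c) \leq d(v_0,c')$ as you use.
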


\begin{proof}
  The proof is similar to that of \cref{thm:perc:kavc}; the focus on the worst-off voter allows us to obtain sharper bounds.
  To avoid notational overload, we will write $\cost(c) := \cost^1_d(c)$ for this proof (and similarly for $c'$).
  We distinguish two cases, based on whether $c$ and $c'$ are ``close'' to each other (compared to the cost of $c$) or not.
  
  \begin{enumerate}
    \item If $c$ and $c'$ are ``close'', in the sense that $d(c,c') \leq \frac{2}{3} \cdot \cost(c)$, then using \cref{lem:anshelevich:perc}, we obtain
    	$$\cost(c') \geq \cost(c) - d(c,c') \geq \cost(c) - \frac{2}{3} \cdot \cost(c) = \frac{1}{3} \cdot \cost(c).$$
    \item If $c$ and $c'$ are not close, so $d(c,c') > \frac{2}{3} \cdot \cost(c)$,
      then let $v$ be any voter with $c \succ_v c'$; such a voter is guaranteed to exist because we assumed that $c'$ does not Pareto-dominate $c$.
      Using the triangle inequality and the fact that $v$ prefers $c$ over $c'$, as well as the fact that the furthest candidate from $c'$ is at least as far as $v$, we obtain that
      $$d(c,c') \leq d(v,c) + d(v,c')
        \leq 2 d(v,c')
        \leq 2 \cdot \cost(c').$$
      Hence, we obtain that $\cost(c) \leq \frac{3}{2} \cdot d(c,c') \leq 3 \cdot \cost(c')$. \qedhere
  \end{enumerate}
\end{proof}

Because any Pareto efficient candidate $c$ satisfies the condition of \cref{lem:non-domination} for the \emph{optimal} candidate $c^* \ (= c')$, we~immediately obtain the following corollary:

\begin{corollary}
  \label{cor:egal:pareto}
  Every Pareto efficient candidate $c$ has egalitarian metric distortion at most 3.
\end{corollary}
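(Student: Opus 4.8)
The plan is to derive the corollary as an immediate consequence of \cref{lem:non-domination}. First I would fix an arbitrary metric $d \cons \prof$ and let $c^*_d$ denote an optimal candidate under $d$ with respect to the egalitarian social cost. Since the egalitarian distortion $\dist^1_\elec(c)$ is by definition the supremum of the ratio $\cost^1_d(c)/\cost^1_d(c^*_d)$ over all consistent metrics, the task reduces to bounding this ratio by $3$ for every fixed $d$.

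The key observation is that Pareto efficiency of $c$ (\cref{def:pareto}) means precisely that no candidate Pareto dominates $c$; in particular, the optimal candidate $c^*_d$ does not Pareto dominate $c$. This is exactly the hypothesis needed to invoke \cref{lem:non-domination} with the choice $c' = c^*_d$, which immediately gives $\cost^1_d(c) \le 3 \cdot \cost^1_d(c^*_d)$.

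Taking the supremum over all $d \cons \prof$ then yields $\dist^1_\elec(c) \le 3$, completing the argument. There is essentially no obstacle here: all of the real work lives in \cref{lem:non-domination}, and the corollary is a one-line specialization obtained by setting the non-dominating candidate $c'$ equal to the metric-dependent optimum $c^*_d$. The only point deserving a moment's care is that Pareto efficiency is a statement against \emph{every} candidate --- including whichever one happens to be optimal for the given metric $d$ --- so that the hypothesis of the lemma is guaranteed to hold regardless of how $c^*_d$ varies with $d$; this is immediate from the definition of Pareto efficiency.
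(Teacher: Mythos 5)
Your proposal is correct and matches the paper's own argument exactly: the paper also derives the corollary immediately from \cref{lem:non-domination} by taking $c'$ to be the metric-dependent optimal candidate $c^*_d$, using that Pareto efficiency rules out domination by any candidate, including the optimum. Your extra remark about the supremum over consistent metrics just spells out what the paper leaves implicit.
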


Every candidate in the proportional veto core is Pareto efficient because the grand coalition (of all voters) $m$-blocks every Pareto dominated candidate:
this shows that the egalitarian distortion of every candidate in the proportional veto core is at most 3.

In extending this result to the $k$-approval veto core for $k < m$, we face the obstacle that the $k$-approval veto core can contain Pareto-dominated candidates. 
This was observed for the case $k=1$ by \citet{GeneralizedVetoCore} who showed that such domination was only possible in a very limited sense: both the dominated and dominating candidate had to have plurality score 0.
Next, we extend this insight to the $k$-approval veto core in the following lemma.

\begin{lemma}
  \label{lem:limited-Pareto-domination}
  Let $\elec$ be an election, and $c \in \avc[\elec]{k}$ such that $c$ is Pareto-dominated by $c'$.
  Then, $\kapv(c') = 0$.
\end{lemma}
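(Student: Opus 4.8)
The plan is to argue by contrapositive using the core characterization in \cref{def:kapv-veto-core}: assume that $\kapv(c') \geq 1$ and exhibit a coalition that $k$-blocks $c$, contradicting $c \in \avc[\elec]{k}$. The natural choice is to take the witness set $S = \set{c'}$ together with the grand coalition $T = \V$ of all voters. Since $c'$ Pareto-dominates $c$, we have $c' \succ_v c$ for every voter $v$, which is exactly the condition $S \succ_T c$ required by \cref{def:kapv-veto-core}. It then remains to verify the proportional inequality $\card{T}/n > 1 - \kapv(S)/(nk)$.

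With $T = \V$ the left-hand side equals $1$, so the inequality reduces to $\kapv(c')/(nk) > 0$, which holds precisely when $\kapv(c') \geq 1$. Hence, under the assumption $\kapv(c') \geq 1$, the grand coalition $k$-blocks $c$ with witness set $\set{c'}$, so $c \notin \avc[\elec]{k}$; this contradiction forces $\kapv(c') = 0$.

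Alternatively, the same conclusion follows directly from the matching characterization in \cref{thm:equivalence}. In $\domg{k}(c)$ an edge joins a copy of $v$ to a copy of a candidate $a$ only if $c \succeq_v a$, but Pareto domination gives $c' \succ_v c$ (and $c \neq c'$) for every $v$, so no copy of $c'$ has any incident edge. If $\kapv(c') \geq 1$, then $\domg{k}(c)$ contains at least one such isolated copy of $c'$, which cannot be matched; consequently $\domg{k}(c)$ has no perfect matching and $c \notin \avc[\elec]{k}$, again contradicting the hypothesis.

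There is no serious obstacle here; the only point requiring care is selecting the coalition and witness set so that the proportional threshold collapses exactly to the condition $\kapv(c') > 0$. Taking $T$ to be the grand coalition (so that $\card{T}/n = 1$) and $S$ the singleton $\set{c'}$ accomplishes this, and using the entire electorate is justified precisely because Pareto domination is a universal, all-voter preference. I~expect to present the blocking-based argument as the main proof, since it is the most transparent, and it mirrors the earlier use of \cref{def:kapv-veto-core} in the proof of \cref{thm:minority}.
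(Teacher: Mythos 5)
Your proposal is correct, and your \emph{primary} argument takes a genuinely different route from the paper's. The paper proves the lemma via the matching characterization (statement (3) of \cref{thm:equivalence}): since $c \in \avc{k}$, the graph $\domg{k}(c)$ has a perfect matching $M$; the $\kapv(c') > 0$ copies of $c'$ must be matched in $M$, and any voter $v$ matched to a copy of $c'$ satisfies $c \succ_v c'$, contradicting Pareto domination. Your alternative matching argument is essentially this same proof in a slightly different phrasing (you observe the copies of $c'$ are isolated vertices, hence no perfect matching exists; the paper instead assumes the matching and derives the contradicting voter --- the two are contrapositives of one another). Your main argument, by contrast, works directly with the blocking definition (\cref{def:kapv-veto-core}), taking $T = \V$ and $S = \set{c'}$: Pareto domination gives $S \succ_T c$, and the threshold $\card{T}/n > 1 - \kapv(S)/(nk)$ collapses to $\kapv(c') > 0$, exactly as you compute. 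This is arguably the more elementary proof, since $\avc{k}$ is \emph{defined} via blocking, so no appeal to \cref{thm:equivalence} is needed at all, and it mirrors the proof of \cref{thm:minority} as you note. What the paper's choice buys instead is local coherence: the subsequent proof of \cref{thm:egal:kavc} reuses the perfect matching $M$ of $\domg{k}(w)$, so stating this lemma in matching language keeps the whole egalitarian-distortion section running on one piece of machinery. Both proofs are complete and correct.
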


\begin{proof}
  We prove that if $c \in \avc{k}$ and $\kapv(c') > 0$, then $c'$ cannot Pareto-dominate $c$.
  To see this, recall from \cref{thm:equivalence} that $c \in \avc{k}$ if and only if the bipartite graph $\domg{k}(c)$ has a perfect matching $M$.
  $\domg{k}(c)$ contains $\kapv(c') > 0$ copies of $c'$, which must be matched in $M$.
  By definition of the edges of $\domg{k}(c)$, any voter $v$ matched to a copy of $c'$ must satisfy $c \succ_v c'$, and the existence of such a voter $v$ implies that $c'$ cannot Pareto-dominate $c$.
\end{proof}

We are now ready to state and prove our main theorem on the egalitarian distortion of the $k$-approval veto core.

\begin{theorem} \label{thm:egal:kavc}
  For every $k$, the egalitarian distortion of every candidate in the $k$-approval veto core is at most 3.
\end{theorem}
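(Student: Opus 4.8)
The plan is to fix an arbitrary consistent metric $d \cons \prof$, let $W \in \avc{k}$ be the candidate under consideration, and let $O = c^*_d$ be an optimal candidate; the goal is $\cost^1_d(W) \le 3\,\cost^1_d(O)$. The natural split is on whether $O$ Pareto-dominates $W$. If it does \emph{not}, then \cref{lem:non-domination} applies directly with $c = W$ and $c' = O$, immediately giving $\cost^1_d(W) \le 3\,\cost^1_d(O)$ and finishing this case for free. (More generally, every Pareto-efficient candidate is already handled by \cref{cor:egal:pareto}, so only Pareto-dominated core candidates can cause trouble.)

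The main obstacle is exactly the remaining case, where the optimal candidate $O$ Pareto-dominates $W$: here \cref{lem:non-domination} cannot be invoked with the optimum as reference, and a small example (two voters who rank $O$ just above $W$ but place disjoint favorites on top) shows the ratio $3$ is genuinely attained, so this case cannot be dismissed as vacuous. The key leverage is \cref{lem:limited-Pareto-domination}: since $W \in \avc{k}$ is Pareto-dominated by $O$, we must have $\kapv(O) = 0$, i.e.\ $O$ lies outside every voter's top $k$. My plan is then to argue by contradiction, extracting a useful reference candidate from the perfect matching guaranteed by \cref{thm:equivalence}.

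Concretely, write $R^* = \cost^1_d(O) = \max_v d(v,O)$ and suppose for contradiction that some voter $u$ satisfies $d(u, W) > 3R^*$. Let $M$ be a perfect matching in $\domg{k}(W)$, and let $q$ be the candidate matched to a copy of $u$; the defining edge condition gives $W \succeq_u q$, so $d(u,q) \ge d(u,W) > 3R^*$, while $q$ being matched forces $\kapv(q) > 0$. Two triangle-inequality steps, each discarding a term of size at most $R^*$, then yield $d(q, O) \ge d(u,q) - d(u,O) > 2R^*$ and, for any voter $w$ with $q \in \top_k(w)$, $d(w,q) \ge d(q,O) - d(w,O) > R^*$. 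The punchline is that such a $w$ exists (because $\kapv(q) > 0$) and that, since $\kapv(O) = 0$ places $O$ outside $\top_k(w)$ while $q \in \top_k(w)$, consistency forces $q \succ_w O$ and hence $d(w,q) \le d(w,O) \le R^*$ --- contradicting $d(w,q) > R^*$. Thus no such $u$ exists, so $\cost^1_d(W) \le 3R^*$, and since $d$ was arbitrary the egalitarian distortion of $W$ is at most $3$. The only delicate point is ensuring the reference candidate $q$ produced by the matching is simultaneously far from $O$ (via $u$) and inside some voter's top $k$ (via $\kapv(q) > 0$); the fact $\kapv(O) = 0$ is precisely what converts the latter into the contradiction.
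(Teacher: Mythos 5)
Your proposal is correct and follows essentially the same route as the paper's proof: the same case split on whether the optimum Pareto-dominates $W$, the same use of \cref{lem:limited-Pareto-domination} to conclude $\kapv(O)=0$, and the same extraction, via the perfect matching of \cref{thm:equivalence}, of a candidate $q$ matched to the worst-off voter together with a voter who ranks $q$ in their top $k$ and hence prefers $q$ to $O$. The only difference is presentational: where the paper invokes \cref{lem:non-domination} a second time on the pair $(q, c^*)$, you inline the identical triangle-inequality chain (your bounds $d(u,q)>3R^*$, $d(q,O)>2R^*$ versus $d(q,O)\le 2R^*$ are exactly the lemma's computation unfolded as a contradiction).
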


\begin{proof}
  Fix an election \election and a~metric $d \cons \prof$, and let $w$ be a candidate in \mbox{the $k$-approval} veto core.
  Let $c^*$ be an optimal candidate under $d$.
  If $c^*$ does not Pareto-dominate $w$, then the~result follows immediately by applying \cref{lem:non-domination} to $w$ and $c^*$.
     
  Otherwise, by \cref{lem:limited-Pareto-domination}, we first obtain that $\kapv(c^*) = 0$.
  By \cref{thm:equivalence}, there is a perfect matching $M$ of $\domg{k}(w)$.
  Let $v$ be a most distant voter from $w$ under $d$.
        $\domg{k}(w)$ must contain $k$ copies of $v$, all of which are matched under $M$ (possibly to multiple copies of the~same candidate).
  Let~$c$ be any candidate such that at least one copy of $v$ is matched to at least one copy of $c$.
  Since $w \succ_v c$ by definition of the edges of $\domg{k}(w)$ and $\cost^1_d(c) \geq d(c,v)$ by definition of egalitarian cost, 
  we~can bound $\cost^1_d(w) = d(w, v)  \leq d(c, v)  \leq \cost^1_d(c)$.

  Next, we observe that because $\domg{k}(w)$ contained a copy of $c$, and the number of copies of $c$ is $\kapv(c)$, we get that $\kapv(c) > 0$, whereas $\kapv(c^*) = 0$. 
  Therefore, at least one voter (ranking $c$ in the top $k$ positions) prefers $c$ over $c^*$; in particular, $c^*$ cannot Pareto-dominate $c$.
  By \cref{lem:non-domination}, this implies that $\cost^1_d(c) \leq 3 \cdot \cost^1_d(c^*)$, and hence $\cost^1_d(w) \leq 3 \cdot \cost^1_d(c^*)$.
\end{proof}

\needspace{5\baselineskip}
\section{Conclusion and Future Directions}
\label{sec:conclusion}

Our analysis shows that as $k$ increases, \kapvveto sacrifices welfare gradually to enhance minority protection.
Along with its simplicity, this makes \kapvveto potentially practical for settings where it is desirable to balance the majority and minority principles.
Hence, studying other axiomatic properties of \kapvveto would be of interest.
For instance, the rule violates the essential axiom of \emph{anonymity} (i.e., the requirement that all voters be treated equally a priori). 
However, by processing veto votes simultaneously, as shown in \cite{GeneralizedVetoCore}, both anonymity and \emph{neutrality} (i.e., the counterpart of anonymity for candidates) can be satisfied.

Another practical aspect of \kapvveto is that the parameter $k$ (i.e., the number of approval~votes) provides an intuitive means to adjust the desired level of mutual minority protection, albeit at the cost of some social welfare.
Perhaps the most significant question about this trade-off is whether \kapvveto achieves the optimal balance, i.e., the minimal loss in welfare to reach the desired level of mutual minority protection. 
We~leave determining the best achievable metric distortion by a voting rule satisfying the $\ell$-mutual minority criterion as an open question.  
Exploring the mutual minority protection of other voting rules could be a valuable step towards this challenging goal.
On a related note, it would also be of interest to study mutual minority protection in comparison with other notions such as \emph{normalized} distortion \citep{BCHLPS:utilitarian:distortion,caragiannis:procaccia:voting,procaccia:approximation:gibbard,procaccia:rosenschein:distortion}, and/or in more general settings such as \emph{randomized} voting \citep{anshelevich:postl:randomized,gross:anshelevich:xia:agree, charikar:ramakrishnan:randomized-distortion,breaking_barrier}.

Even though the $\alpha$-percentile and egalitarian objectives appear to favor the minority principle,  
we showed that the metric distortion of \kapvveto with respect to these objectives does not improve as $k$ increases,  
revealing a discrepancy between our axiomatic and welfarist analyses.
While resolving the optimal metric distortion conjecture for the utilitarian objective required novel approaches developed over an extended period, we showed that achieving optimal metric distortion for the egalitarian objective is straightforward: simply select a Pareto optimal candidate.
Moreover, although not all candidates in the $k$-approval veto core are Pareto optimal, their egalitarian metric distortion remains optimal.  
This suggests that formulating the minority principle within the metric distortion framework may require a more nuanced objective or entirely new ideas.

A strong assumption in our $\ell$-mutual minority criterion is that minority groups are modeled as coalitions solidly vetoing a subset of candidates $S$, i.e., all members of the coalition rank candidates in $S$ at the bottom (in some order).
However, this assumption is often unrealistic in practice, as minorities rarely form \emph{perfectly} solid coalitions.
Hence, an important direction for future work is to extend the $\ell$-mutual minority criterion to accommodate more robust models of minorities.
This direction is in parallel with the work of \citet{robust_and_verifiable} exploring robust and verifiable proportionality axioms (such as EJR+ and PJR+) in the multi-winner voting setting.
In~connection with this work, it~would also be of interest to investigate the hardness of verifying whether a given candidate satisfies the $\ell$-mutual minority criterion.

\bibliographystyle{plainnat}
\bibliography{davids-bibliography/names,davids-bibliography/conferences,davids-bibliography/bibliography,davids-bibliography/publications,EC_references,new_references}

\begin{thebibliography}{47}
\providecommand{\natexlab}[1]{#1}
\providecommand{\url}[1]{\texttt{#1}}
\expandafter\ifx\csname urlstyle\endcsname\relax
  \providecommand{\doi}[1]{doi: #1}\else
  \providecommand{\doi}{doi: \begingroup \urlstyle{rm}\Url}\fi

\bibitem[Abramowitz et~al.(2019)Abramowitz, Anshelevich, and Zhu]{limited1}
Ben Abramowitz, Elliot Anshelevich, and Wennan Zhu.
\newblock Awareness of voter passion greatly improves the distortion of metric
  social choice.
\newblock In \emph{Proc. 15th Conference on Web and Internet Economics (WINE)},
  pages 3--16, 2019.

\bibitem[Amanatidis et~al.(2021)Amanatidis, Birmpas, Filos-Ratsikas, and
  Voudouris]{limited2}
Georgios Amanatidis, Georgios Birmpas, Aris Filos-Ratsikas, and Alexandros~A.
  Voudouris.
\newblock Peeking behind the ordinal curtain: Improving distortion via cardinal
  queries.
\newblock \emph{Artificial Intelligence}, 296:\penalty0 103488, 2021.

\bibitem[Anshelevich and Postl(2016)]{anshelevich:postl:randomized}
Elliot Anshelevich and John Postl.
\newblock Randomized social choice functions under metric preferences.
\newblock In \emph{Proc. 25th Intl. Joint Conf. on Artificial Intelligence},
  pages 46--59, 2016.

\bibitem[Anshelevich et~al.(2015)Anshelevich, Bhardwaj, and
  Postl]{anshelevich:bhardwaj:postl}
Elliot Anshelevich, Onkar Bhardwaj, and John Postl.
\newblock Approximating optimal social choice under metric preferences.
\newblock In \emph{Proc. 29th AAAI Conf. on Artificial Intelligence}, pages
  777--783, 2015.

\bibitem[Anshelevich et~al.(2018)Anshelevich, Bhardwaj, Elkind, Postl, and
  Skowron]{anshelevich:bhardwaj:elkind:postl:skowron}
Elliot Anshelevich, Onkar Bhardwaj, Edith Elkind, John Postl, and Piotr
  Skowron.
\newblock Approximating optimal social choice under metric preferences.
\newblock \emph{Artificial Intelligence}, 264:\penalty0 27--51, 2018.

\bibitem[Anshelevich et~al.(2021{\natexlab{a}})Anshelevich, Filos-Ratsikas,
  Shah, and Voudouris]{anshelevich:filos-ratsikas:shah:voudouris:reading-list}
Elliot Anshelevich, Aris Filos-Ratsikas, Nisarg Shah, and Alexandros~A.
  Voudouris.
\newblock Distortion in social choice problems: an annotated reading list.
\newblock \emph{SIGecom Exchanges}, 19\penalty0 (1):\penalty0 12--14,
  2021{\natexlab{a}}.

\bibitem[Anshelevich et~al.(2021{\natexlab{b}})Anshelevich, Filos-Ratsikas,
  Shah, and Voudouris]{anshelevich:filos-ratsikas:shah:voudouris:retrospective}
Elliot Anshelevich, Aris Filos-Ratsikas, Nisarg Shah, and Alexandros~A.
  Voudouris.
\newblock Distortion in social choice problems: The first 15 years and beyond.
\newblock In \emph{Proc. 30th Intl. Joint Conf. on Artificial Intelligence},
  pages 4294--4301, 2021{\natexlab{b}}.

\bibitem[Anshelevich et~al.(2024)Anshelevich, Filos-Ratsikas, Jerrett, and
  Voudouris]{improved_metric_distortion}
Elliot Anshelevich, Aris Filos-Ratsikas, Christopher Jerrett, and Alexandros~A.
  Voudouris.
\newblock Improved metric distortion via threshold approvals.
\newblock In \emph{Proc. 38th AAAI Conf. on Artificial Intelligence},
  volume~38, pages 9460--9468, 2024.

\bibitem[Bentert and Skowron(2020)]{bentert:skowron:few-candidates}
Matthias Bentert and Piotr Skowron.
\newblock Comparing election methods where each voter ranks only few
  candidates.
\newblock In \emph{Proc. 34th AAAI Conf. on Artificial Intelligence}, pages
  2218--2225, 2020.

\bibitem[Berger et~al.(2024)Berger, Feldman, Gkatzelis, and
  Tan]{learning_augmented}
Ben Berger, Michal Feldman, Vasilis Gkatzelis, and Xizhi Tan.
\newblock Learning-augmented metric distortion via $(p,q)$-veto core.
\newblock In \emph{Proc. 25th ACM Conf. on Economics and Computation}, 2024.

\bibitem[Black(1948)]{black:rationale}
Duncan Black.
\newblock On the rationale of group decision making.
\newblock \emph{J. Political Economy}, 56:\penalty0 23--34, 1948.

\bibitem[Boutilier et~al.(2015)Boutilier, Caragiannis, Haber, Lu, Procaccia,
  and Sheffet]{BCHLPS:utilitarian:distortion}
Craig Boutilier, Ioannis Caragiannis, Simi Haber, Tyler Lu, Ariel~D. Procaccia,
  and Or~Sheffet.
\newblock Optimal social choice functions: A utilitarian view.
\newblock \emph{Artificial Intelligence}, 227:\penalty0 190--213, 2015.

\bibitem[Brill and Peters(2023)]{robust_and_verifiable}
Markus Brill and Jannik Peters.
\newblock Robust and verifiable proportionality axioms for multiwinner voting.
\newblock In \emph{Proceedings of the 24th ACM Conference on Economics and
  Computation}, EC '23, page 301, 2023.

\bibitem[Caragiannis and Procaccia(2011)]{caragiannis:procaccia:voting}
Ioannis Caragiannis and Ariel~D. Procaccia.
\newblock Voting almost maximizes social welfare despite limited communication.
\newblock \emph{Artificial Intelligence}, 175\penalty0 (9):\penalty0
  1655--1671, 2011.

\bibitem[Caragiannis et~al.(2017)Caragiannis, Nath, Procaccia, and
  Shah]{caragiannisSubsetSelectionImplicit2017}
Ioannis Caragiannis, Swaprava Nath, Ariel~D. Procaccia, and Nisarg Shah.
\newblock Subset {{Selection Via Implicit Utilitarian Voting}}.
\newblock \emph{Journal of Artificial Intelligence Research}, 58:\penalty0
  123--152, January 2017.

\bibitem[Charikar and
  Ramakrishnan(2022)]{charikar:ramakrishnan:randomized-distortion}
Moses Charikar and Prasanna Ramakrishnan.
\newblock Metric distortion bounds for randomized social choice.
\newblock In \emph{Proc. 33rd ACM-SIAM Symp. on Discrete Algorithms}, pages
  2986--3004, 2022.

\bibitem[Charikar et~al.(2024)Charikar, Ramakrishnan, Wang, and
  Wu]{breaking_barrier}
Moses Charikar, Prasanna Ramakrishnan, Kangning Wang, and Hongxun Wu.
\newblock Breaking the metric voting distortion barrier.
\newblock In \emph{Proc. 35th ACM-SIAM Symp. on Discrete Algorithms}, pages
  1621--1640, 2024.

\bibitem[Chaudhury et~al.(2024)Chaudhury, Murhekar, Yuan, Li, Mehta, and
  Procaccia]{federated_learning}
Bhaskar~Ray Chaudhury, Aniket Murhekar, Zhuowen Yuan, Bo~Li, Ruta Mehta, and
  Ariel~D. Procaccia.
\newblock Fair federated learning via the proportional veto core.
\newblock In \emph{Forty-first International Conference on Machine Learning},
  2024.

\bibitem[Diaz-Maurin et~al.(2021)Diaz-Maurin, Yu, and
  Ewing]{nuclear_fuel_disposal}
François Diaz-Maurin, Jerold Yu, and Rodney~C. Ewing.
\newblock Socio-technical multi-criteria evaluation of long-term spent nuclear
  fuel management strategies: A framework and method.
\newblock \emph{Science of The Total Environment}, 777:\penalty0 146086, 2021.

\bibitem[Downs(1957)]{downs:democracy}
Anthony Downs.
\newblock An economic theory of political action in a democracy.
\newblock \emph{The Journal of Political Economy}, 65\penalty0 (2):\penalty0
  135--150, 1957.

\bibitem[Dummett(1984)]{Dummett1984}
Michael Dummett.
\newblock \emph{Voting Procedures}.
\newblock Oxford University Press UK, 1984.

\bibitem[Dummett and Farquharson(1961)]{stability_in_voting}
Michael Dummett and Robin Farquharson.
\newblock Stability in voting.
\newblock \emph{Econometrica}, 29\penalty0 (1):\penalty0 33--43, 1961.

\bibitem[Fain et~al.(2019)Fain, Goel, Munagala, and
  Prabhu]{fain:goel:munagala:prabhu:referee}
Brandon Fain, Ashish Goel, Kamesh Munagala, and Nina Prabhu.
\newblock Random dictators with a random referee: Constant sample complexity
  mechanisms for social choice.
\newblock In \emph{Proc. 33rd AAAI Conf. on Artificial Intelligence}, pages
  1893--1900, 2019.

\bibitem[Faliszewski et~al.(2017)Faliszewski, Skowron, Slinko, and
  Talmon]{similar_approach}
Piotr Faliszewski, Piotr Skowron, Arkadii Slinko, and Nimrod Talmon.
\newblock Multiwinner rules on paths from k-borda to chamberlin–courant.
\newblock In \emph{Proceedings of the Twenty-Sixth International Joint
  Conference on Artificial Intelligence, {IJCAI-17}}, pages 192--198, 2017.

\bibitem[Gamboa and Munda(2007)]{windfarm_location}
Gonzalo Gamboa and Giuseppe Munda.
\newblock The problem of windfarm location: A social multi-criteria evaluation
  framework.
\newblock \emph{Energy Policy}, 35\penalty0 (3):\penalty0 1564--1583, 2007.

\bibitem[Gkatzelis et~al.(2020)Gkatzelis, Halpern, and
  Shah]{gkatzelis:halpern:shah:resolving}
Vasilis Gkatzelis, Daniel Halpern, and Nisarg Shah.
\newblock Resolving the optimal metric distortion conjecture.
\newblock In \emph{Proc. 61st IEEE Symp. on Foundations of Computer Science},
  pages 1427--1438, 2020.

\bibitem[Goel et~al.(2017)Goel, Krishnaswamy, and
  Munagala]{goel:krishnaswamy:munagala}
Ashish Goel, Anilesh~Kollagunta Krishnaswamy, and Kamesh Munagala.
\newblock Metric distortion of social choice rules: Lower bounds and fairness
  properties.
\newblock In \emph{Proc. 18th ACM Conf. on Economics and Computation}, pages
  287--304, 2017.

\bibitem[Gross et~al.(2017)Gross, Anshelevich, and
  Xia]{gross:anshelevich:xia:agree}
Stephen Gross, Elliot Anshelevich, and Lirong Xia.
\newblock Vote until two of you agree: Mechanisms with small distortion and
  sample complexity.
\newblock In \emph{Proc. 31st AAAI Conf. on Artificial Intelligence}, pages
  544--550, 2017.

\bibitem[Ianovski and Kondratev(2021)]{ianovskiComputingProportionalVeto2021}
Egor Ianovski and Aleksei~Y. Kondratev.
\newblock Computing the {{Proportional Veto Core}}.
\newblock \emph{Proceedings of the AAAI Conference on Artificial Intelligence},
  35\penalty0 (6):\penalty0 5489--5496, 2021.

\bibitem[Ianovski and Kondratev(2023)]{ianovskiComputingProportionalVeto2023}
Egor Ianovski and Aleksei~Y. Kondratev.
\newblock Computing the proportional veto core, 2023.

\bibitem[Kempe(2020{\natexlab{a}})]{DistortionCommunication}
David Kempe.
\newblock Communication, distortion, and randomness in metric voting.
\newblock In \emph{Proc. 34th AAAI Conf. on Artificial Intelligence}, pages
  2087--2094, 2020{\natexlab{a}}.

\bibitem[Kempe(2020{\natexlab{b}})]{DistortionDuality}
David Kempe.
\newblock An analysis framework for metric voting based on {LP} duality.
\newblock In \emph{Proc. 34th AAAI Conf. on Artificial Intelligence}, pages
  2079--2086, 2020{\natexlab{b}}.

\bibitem[Kizilkaya and Kempe(2022)]{PluralityVeto}
Fatih~Erdem Kizilkaya and David Kempe.
\newblock \textsc{PluralityVeto}: A simple voting rule with optimal metric
  distortion.
\newblock In \emph{Proc. 31st Intl. Joint Conf. on Artificial Intelligence},
  pages 349--355, 2022.

\bibitem[Kizilkaya and Kempe(2023)]{GeneralizedVetoCore}
Fatih~Erdem Kizilkaya and David Kempe.
\newblock Generalized veto core and a practical voting rule with optimal metric
  distortion.
\newblock In \emph{Proc. 24th ACM Conf. on Economics and Computation}, pages
  913--936. Association for Computing Machinery, 2023.
\newblock ISBN 9798400701047.

\bibitem[Kondratev and Ianovski(2024)]{veto_core_consistent}
Aleksei~Y. Kondratev and Egor Ianovski.
\newblock Veto core consistent preference aggregation.
\newblock In \emph{Proceedings of the 23rd International Conference on
  Autonomous Agents and Multiagent Systems}, AAMAS '24, page 1020–1028.
  International Foundation for Autonomous Agents and Multiagent Systems, 2024.

\bibitem[Kondratev and Nesterov(2020)]{kondratevMeasuringMajorityPower2020}
Aleksei~Y. Kondratev and Alexander~S. Nesterov.
\newblock Measuring majority power and veto power of voting rules.
\newblock \emph{Public Choice}, 183\penalty0 (1):\penalty0 187--210, April
  2020.

\bibitem[Lackner and Skowron(2023)]{ABCbook}
Martin Lackner and Piotr Skowron.
\newblock \emph{Multi-Winner Voting with Approval Preferences}.
\newblock SpringerBriefs in Intelligent Systems. Springer International
  Publishing, 2023.
\newblock \doi{10.1007/978-3-031-09016-5}.
\newblock URL \url{https://library.oapen.org/handle/20.500.12657/60149}.

\bibitem[Mandal et~al.(2019)Mandal, Procaccia, Shah, and
  Woodruff]{mandal:procaccia:shah:woodruff}
Debmalya Mandal, Ariel~D. Procaccia, Nisarg Shah, and David~P. Woodruff.
\newblock Efficient and thrifty voting by any means necessary.
\newblock In \emph{Proc. 33rd Advances in Neural Information Processing
  Systems}, 2019.

\bibitem[Moulin(1980)]{moulin:single-peak}
Herv\'{e} Moulin.
\newblock On strategy-proofness and single peakedness.
\newblock \emph{Public Choice}, 35:\penalty0 437--455, 1980.

\bibitem[Moulin(1981)]{moulin:proportional-veto-principle}
Herv\'{e} Moulin.
\newblock The proportional veto principle.
\newblock \emph{The Review of Economic Studies}, 48\penalty0 (3):\penalty0
  407--416, 1981.

\bibitem[Moulin(1982)]{moulinVotingProportionalVeto1982}
Herv\'{e} Moulin.
\newblock Voting with {{Proportional Veto Power}}.
\newblock \emph{Econometrica}, 50\penalty0 (1):\penalty0 145--162, 1982.

\bibitem[Mueller(1978)]{muellerVotingVeto1978}
Dennis~C. Mueller.
\newblock Voting by veto.
\newblock \emph{Journal of Public Economics}, 10\penalty0 (1):\penalty0 57--75,
  August 1978.

\bibitem[Munagala and Wang(2019)]{munagala:wang:improved}
Kamesh Munagala and Kangning Wang.
\newblock Improved metric distortion for deterministic social choice rules.
\newblock In \emph{Proc. 20th ACM Conf. on Economics and Computation}, pages
  245--262, 2019.

\bibitem[Peters(2023)]{petersNote2023}
Jannik Peters.
\newblock A {{Note}} on {{Rules Achieving Optimal Metric Distortion}}, May
  2023.

\bibitem[Pierczy\'{n}ski and Skowron(2019)]{pierczynski:skowron:approval}
Grzegorz Pierczy\'{n}ski and Piotr Skowron.
\newblock Approval-based elections and distortion of voting rules.
\newblock In \emph{Proc. 28th Intl. Joint Conf. on Artificial Intelligence},
  pages 543--549, 2019.

\bibitem[Procaccia(2010)]{procaccia:approximation:gibbard}
Ariel~D. Procaccia.
\newblock Can approximation circumvent {Gibbard-Satterthwaite}?
\newblock In \emph{Proc. 24th AAAI Conf. on Artificial Intelligence}, pages
  836--841, 2010.

\bibitem[Procaccia and Rosenschein(2006)]{procaccia:rosenschein:distortion}
Ariel~D. Procaccia and Jeffrey~S. Rosenschein.
\newblock The distortion of cardinal preferences in voting.
\newblock In \emph{Proc. 10th Intl. Workshop on Cooperative Inform. Agents X},
  pages 317--331, 2006.

\end{thebibliography}

\appendix
\section{Alternative Proof Based on Flow Networks}
\label{apx:flow}

Here, we give an alternative and direct proof of the upper bound of $2k+1$ (from \cref{sec:distortion}) on the utilitarian metric distortion of candidates in the~$k$-approval veto core.
The proof uses the flow technique of \citet{DistortionDuality} which is encapsulated in \cref{lem:dual-flow} below.

\begin{lemma}[\citet{DistortionDuality}, Lemma 3.1]
\label{lem:dual-flow}
	Given an election \election, let $H = (\V \times \C, E)$ be a~directed graph with edges defined as follows:
	\begin{itemize}
		\item Each voter $v$ has a directed \emph{preference edge} $(v,c) \to (v,c')$ for all pairs of candidates $c \succ_v c'$.
		\item Each candidate $c$ has a bi-directed \emph{sideways edge} $(v,c) \leftrightarrow (v',c)$ for all pairs of voters $v \neq v'$.
	\end{itemize}
	
	Given a pair of candidates $w$ and $c^*$, a \emph{$(w, c^*)$-flow} is a circulation $f$ (i.e., non-negative and conserving flow) on $H$
	such that exactly one unit of flow originates at node $(v,w)$ for all $v \in \V$, and flow is only absorbed at nodes $(v, c^*)$. 
	
	The \emph{cost of $f$ at voter $v$} is the total amount of flow absorbed at $(v, c^*)$, plus the total flow on sideways edges into or out of nodes $(v, c)$, for any candidate $c$, i.e.,
          \[ \cost_v(f) = \sum_{e \text{ into } (v, c^*)} f_e
            + \sum_{c \neq c^*} \sum_{v' \neq v} f_{(v', c) \to (v, c)}  + f_{(v, c) \to (v', c)}.
          \]
	The \emph{cost of $f$} is $\cost(f) = \max_{v \in \V} \cost_v(f)$.
	
	\bigskip

	This notion of cost can be used to give an upper bound on the utilitarian metric distortion of a candidate $w$ as follows:
	If for all $c^* \in \C$, there exists a $(w, c^*)$-flow $f$ with $\cost(f) \le \lambda$, then $\dist^+_\elec(w) \le \lambda$, that is, the~utilitarian metric distortion of candidate $w$ is at most $\lambda$.
\end{lemma}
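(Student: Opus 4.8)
The plan is to prove the stated implication directly. Fixing an arbitrary target candidate $c^* \in \C$ and a consistent metric $d \cons \prof$, I would show that the existence of a $(w,c^*)$-flow $f$ with $\cost(f) \le \lambda$ forces $\cost^+_d(w) \le \lambda \cdot \cost^+_d(c^*)$. Since $\dist^+_\elec(w)$ is the supremum over all consistent $d$ of the ratio between $\cost^+_d(w)$ and the cost of the \emph{optimal} candidate, and since a suitable flow is assumed to exist for \emph{every} $c^* \in \C$ (in particular for the optimal candidate under each $d$), establishing this one inequality per fixed $c^*$ immediately yields the bound of $\lambda$ on the distortion of $w$.

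The engine is the standard summation-by-parts identity for flows. Assigning to each node $(v,c)$ of $H$ the potential $\phi(v,c) = d(v,c)$ and writing $a_{v'}$ for the flow absorbed at the sink node $(v',c^*)$, flow conservation (one unit of supply at each source $(v,w)$ and demand $a_{v'}$ at each sink) gives
\[ \sum_{v \in \V} d(v,w) \;=\; \sum_{v' \in \V} a_{v'}\, d(v',c^*) \;+\; \sum_{e \in E} f_e\big(\phi(\mathrm{tail}(e)) - \phi(\mathrm{head}(e))\big). \]
I would then bound the edge sum by edge type. Along a preference edge $(v,c)\to(v,c')$ we have $c \succ_v c'$, so consistency gives $d(v,c) \le d(v,c')$ and the term is nonpositive, hence discardable. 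Along a sideways edge $(v,c)\to(v',c)$ with $c \ne c^*$, two applications of the triangle inequality (first $d(v,c) \le d(v,v') + d(v',c)$, then $d(v,v') \le d(v,c^*) + d(v',c^*)$ using symmetry) yield $d(v,c) - d(v',c) \le d(v,c^*) + d(v',c^*)$; the sideways edges \emph{among the sink nodes} (the case $c = c^*$) I would instead keep exactly as $d(v,c^*) - d(v',c^*)$.

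After substituting these bounds and regrouping the right-hand side by voter, the coefficient multiplying each $d(u,c^*)$ is a combination of the absorbed flow $a_u$, the \emph{signed} sideways flow at the sink node $(u,c^*)$, and the total sideways flow at $(u,c)$ over $c \ne c^*$. The heart of the argument — and the step I expect to be the main obstacle — is verifying that this coefficient is at most $\cost_u(f)$. Here the cancellation retained in the $c = c^*$ case is essential: writing $a_u$ as the gross flow into $(u,c^*)$ minus the gross flow out, the incoming and outgoing sideways contributions at the sink telescope against $a_u$, leaving exactly the preference-flow into $(u,c^*)$ minus its outgoing preference-flow, plus the sideways flow over $c \ne c^*$. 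Dropping the nonnegative outgoing-preference term and reinserting the nonnegative incoming sideways flow bounds this by the full incoming flow at $(u,c^*)$ plus the $c \ne c^*$ sideways flow, which is precisely $\cost_u(f) \le \lambda$.

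Because $d(u,c^*) \ge 0$ for every $u$ and each coefficient is at most $\lambda$, the regrouped sum is at most $\lambda \sum_u d(u,c^*) = \lambda \cdot \cost^+_d(c^*)$, giving $\cost^+_d(w) \le \lambda \cdot \cost^+_d(c^*)$ and completing the proof. The only genuine subtlety beyond this bookkeeping is the exact-versus-lossy treatment of sideways edges: bounding the sink-candidate sideways edges by the triangle inequality (as for all other candidates) would overcount and break the match with $\cost_u(f)$, so those terms must be carried with their sign intact. This argument is essentially the LP-duality interpretation made concrete, with $f$ playing the role of a dual-feasible solution certifying the primal (distortion) bound.
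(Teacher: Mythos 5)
Your proof is correct, but there is nothing in this paper to compare it against: the paper states the lemma as an import from \citet{DistortionDuality} (their Lemma~3.1) and uses it as a black box in \cref{apx:flow} without reproving it. In the cited source, the lemma is established via LP duality --- for fixed $w$ and $c^*$, the worst-case consistent (pseudo-)metric is the optimum of a linear program, and a $(w,c^*)$-flow of cost at most $\lambda$ is a feasible solution to (a rescaling of) the dual, so weak duality gives the bound. Your argument unrolls exactly this duality computation into an elementary potential calculation, as you yourself observe at the end; what it buys is a self-contained proof that makes visible where each hypothesis enters (consistency kills the preference-edge terms, the triangle inequality handles sideways edges at candidates $c \neq c^*$, and nonnegativity of $d(u,c^*)$ justifies multiplying coefficients by $\lambda$). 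Your handling of the one delicate point is also right, and I verified the bookkeeping: writing the absorbed amount $a_u$ as total inflow minus total outflow at $(u,c^*)$, the signed sideways contributions $+\mathrm{out}-\mathrm{in}$ at $(u,c^*)$ cancel against the sideways parts of $a_u$, leaving $(\text{preference inflow}) - (\text{preference outflow}) + (\text{off-}c^*\text{ sideways flow})$, which after dropping the outflow term and adding back the nonnegative sideways inflow is at most $\sum_{e \text{ into } (u,c^*)} f_e$ plus the off-$c^*$ sideways flow, i.e., exactly $\cost_u(f) \le \lambda$. Note that this step silently relies on the lemma's cost formula counting \emph{all} incoming flow at $(u,c^*)$, not merely the absorbed flow (the verbal description in the statement and the displayed formula differ on this point, and your proof correctly follows the formula); had the cost charged only $a_u$, the coefficient bound could fail when sideways outflow at $(u,c^*)$ exceeds sideways inflow. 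Two trivial boundary remarks you could add for completeness: the degenerate case $\cost^+_d(c^*) = 0$, where your inequality forces $\cost^+_d(w) = 0$ as well, and the case $w = c^*$, where sources and sinks coincide and the claim is vacuous.
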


We now use this lemma to get a self-contained proof for the following theorem.  

\begin{theorem} \label{thm:alternative}
	The utilitarian metric distortion of every candidate in the $k$-approval veto core is at~most $2k+1$. 
\end{theorem}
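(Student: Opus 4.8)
The plan is to invoke the flow lemma (\cref{lem:dual-flow}): it suffices to show that for \emph{every} candidate $c^* \neq w$ there is a $(w, c^*)$-flow of cost at most $2k+1$, since then $\dist^+_\elec(w) \le 2k+1$. So I fix $w \in \avc{k}$ and an arbitrary $c^*$. By \cref{thm:equivalence}, membership $w \in \avc{k}$ yields a perfect matching $M$ of the bipartite graph $\domg{k}(w)$, matching the $k$ copies of each voter to candidate copies, where a copy of $v$ may be matched to a copy of $c$ only when $w \succeq_v c$. I read $M$ as assigning, to each of the $\kapv(c)$ copies of candidate $c$ (one ``owned'' by each voter $v'$ with $c \in \top_k(v')$), a matched voter-copy of some $v$ with $w \succeq_v c$.

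First I would build the flow in stages, routing $1/k$ unit along each of the $nk$ matching edges, so that the unit originating at each $(v,w)$ is split into its $k$ matched streams. For a matching edge pairing a copy of $v$ with the copy of $c$ owned by $v'$: push $1/k$ unit from $(v,w)$ \emph{down} to $(v,c)$ — a free preference path, valid because $w \succeq_v c$ — and then \emph{sideways} to $(v',c)$. After this stage, each node $(v',c)$ with $c \in \top_k(v')$ carries exactly $1/k$ unit. In the second stage, from each such node I descend to $(v', c^*)$ whenever $c \succeq_{v'} c^*$, again a free preference step; in particular, every voter $v'$ with $c^* \notin \top_k(v')$ routes all of its top-$k$ flow down to $(v', c^*)$, since then $c^*$ lies below all of $v'$'s top-$k$ candidates.

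The remaining ``stuck'' flow sits at nodes $(v',c)$ for which $c^* \succ_{v'} c$; note this forces $c^* \in \top_k(v')$, as $c^*$ is then ranked above the top-$k$ candidate $c$. The hard part, and the crux of the whole argument, is to reroute this stuck flow to $c^*$ and to bound the resulting congestion. Each stuck unit must be pushed sideways to some voter $u$ with $c \succeq_u c^*$ (so that $u$ can descend to $(u, c^*)$), and these targets must be chosen so that no node is overloaded. I would charge the final cost at a voter $u$ to three sources: the stage-one sideways flow it carries as an \emph{owner} of candidate copies — at most $k \cdot \tfrac1k = 1$ incoming and at most $1$ outgoing, since $u$ owns exactly $k$ copies and originates exactly $k$ streams; its own unit possibly \emph{passing through} $(u, c^*)$, which — crucially — the cost counts as inflow even when the flow then exits along a sideways edge at $c^*$ (recall such edges at $c^*$ are themselves free); and the rerouted stuck flow absorbed at or passing through $(u, c^*)$.

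The main obstacle is precisely showing that these contributions never sum to more than $2k+1$ at any voter. This is where I expect to need the matching again: Hall's condition for $\domg{k}(w)$ — equivalently, the core inequality $\card{T}/n \le 1 - \kapv(S)/(nk)$ for $T = \{v : S \succ_v w\}$ — must be used to guarantee that enough voters can descend to $c^*$ to absorb the stuck flow so that at every voter the $c^*$-inflow plus the sideways load stays within $2k+1$. The lower-bound instance of \cref{lem:util:lower-bound} shows this budget is attained: there all stuck flow funnels to the single ``good'' voter, whose $c^*$-inflow reaches $k+1$ and whose sideways load reaches $k$, for a total of exactly $2k+1$. Once the per-voter bound $\cost_u(f) \le 2k+1$ is established, \cref{lem:dual-flow} yields $\dist^+_\elec(w) \le 2k+1$, completing the proof.
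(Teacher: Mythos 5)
Your setup is the same as the paper's --- the flow lemma (\cref{lem:dual-flow}), the perfect matching of $\domg{k}(w)$ from \cref{thm:equivalence}, the ``ownership'' of the $\kapv(c)$ copies of $c$ by the voters ranking $c$ in their top $k$ (an acceptable substitute for the paper's decomposition into $k$ voter-to-voter matchings $N_i$), and the correct identification of the stuck flow as sitting at nodes $(v',c)$ with $c^* \succ_{v'} c$. But the crux is missing, and you say so yourself: the rerouting of the stuck flow and the congestion bound are deferred (``this is where I expect to need the matching again''). Worse, the specific rerouting you sketch can fail outright. Once a stream has descended to $(v',c)$ with $c^* \succ_{v'} c$, preference edges lead only further down $v'$'s ranking, so the stream can only exit sideways at $c$ (or at some even lower candidate); your plan requires a receiver $u$ with $c \succeq_u c^*$, but no such voter need exist --- nothing prevents every voter from ranking $c^*$ above $c$, even while copies of $c$ appear in $\domg{k}(w)$ and carry stuck flow. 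And even when receivers do exist, you face a different receiver set for each candidate $c$, with no argument bounding the aggregate load at any single voter.

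The paper avoids this trap by never descending a stream unless its owner can forward it to $c^*$: the $1/k$ unit travels $(v,w) \to (v,M_i(v)) \to (N_i(v),M_i(v)) \to (N_i(v),c^*)$ only when $M_i(v) = \top(N_i(v))$ or $c^* \notin \top_k(N_i(v))$, and the remaining flow --- computed to be exactly $\frac{k-1}{k}\cdot\kapv(c^*)$ units --- is held at the nodes $(v,w)$ and rerouted sideways at the \emph{single} candidate $w$, into $V^* = \setbuild{v}{w \succeq_v c^*}$, from where it descends directly to $(v^*,c^*)$. The core property (beyond mere existence of the matching) enters exactly once, instantiated with witness set $\set{c^*}$: if $\card{V^*} < \kapv(c^*)/k$, then $\comp{V^*}$ $k$-blocks $w$. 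This yields at most $k-1$ rerouted units per receiver, which, combined with at most $1$ unit of stage-one sideways inflow (each incoming unit counted twice, once as sideways flow and once upon absorption at $(v,c^*)$) and at most $1$ outgoing unit, gives $\cost(f) \le (k-1+1) + k + 1 = 2k+1$. You cite the blocking inequality only in its generic Hall form, never instantiate it with $S = \set{c^*}$, never compute the stuck volume, and never perform this accounting --- so the proposal is incomplete at precisely the step on which the theorem turns, and its sketched completion would need to be restructured along the paper's lines (hold back the undeliverable streams at $w$) rather than patched.
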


\begin{proof}
	Let $w \in \avc{k}$. 
	By \cref{thm:equivalence}, there is a~perfect matching $M$ of $\domg{k}(w)$.
	Let $M_i(v)$ denote the candidate that the \Kth{i} copy of voter $v$ is matched to in $M$.
	We will now argue that there are $k$ perfect matchings $N_i: V \to V$ between voters and \emph{voters} (for $i=1, \ldots, k$) such that $M_i(v) \in \top_k(N_i(v))$ for all $v \in V$ and $i \in \set{1, \ldots, k}$.
	Since each candidate $c$ has $\kapv(c)$ copies in~$\domg{k}(w)$, 
	we have 
	$$\sum_{i=1}^k \card{\setbuild{v \in V}{M_i(v) = c}} = \kapv(c) = \card{\setbuild{v \in V}{c \in \top_k(v)}}.$$
	
	We can therefore define a perfect matching between the $\kapv(c)$ copies of voters $v$ that are matched to $c$ under $M$ and the $\kapv(c)$ copies of voters $v'$ who have $c \in \top_k(v')$. 
	Since this matching has exactly $k$ copies of each voter in the first set and $k$ copies of each voter in the~second~set, it~defines a $k$-regular bipartite graph when the copies of nodes are contracted into a single node each. 
	Therefore, it can be partitioned into $k$ perfect matchings 
	$N_i : V \rightarrow V$ such that $M_i(v) \in \top_k(N_i(v))$ for all $v \in V$ and $i \in \set{1, \ldots, k}$, as desired.
	
	For an arbitrary candidate $c^*$, we will now describe a $(w, c^*)$-flow $f$ with $\cost(f) \leq 2k+1$, which is sufficient to prove \cref{thm:alternative} due to \cref{lem:dual-flow}.
	Recall that for each voter $v$, we have to specify how to route one unit of flow originating at $(v,w)$ to (a combination of) nodes $(v',c^*)$.
	For each voter $v$ and $i \in \set{1, \ldots, k}$, if $M_i(v) = \top(N_i(v))$ or $c^* \notin \top_k(N_i(v))$, then exactly $1/k$ units of flow originating at ${(v, w)}$ are sent along the path $(v,w) \to (v, M_i(v)) \to (N_i(v), M_i(v)) \to (N_i(v), c^*)$. 
	To argue that this is a valid path in $H$, 
	note that the first edge exists because $w \succeq_v M_i(v)$ by \cref{def:domination_graphs}.
	The second edge is a sideways edge of $M_i(v)$.
	The third edge exists because $M_i(v) \succ_{N_i(v)} c^*$ due to either of the two conditions $M_i(v) = \top(N_i(v))$ or $c^* \notin \top_k(N_i(v))$, using that $M_i(v) \in \top_k(N_i(v))$ in the second case.  

	Before we describe how the rest of the flow is routed, we first calculate the total amount of flow routed so far. 
	Focus on one voter $v$ and the flow into nodes $(v,c)$ along sideways edges. 
	By~the~construction above, such flow occurs exactly when the flow originates at $(w,v')$ with ${N_i(v') = v}$ and we~have either $c^* \notin \top_k(v)$ or $M_i(v') = \top(v)$.
	In~the~former~case, for each $c \in \top_k(v)$, there is one node $v'$ with $M_i(v') = c$ and $N_i(v') = v$, resulting in $1/k$ units of flow into $(v,c)$.
	In~the~latte~case, the node $(v,\top(v))$ receives $1/k$ units of flow from the unique $v'$ with $M_i(v') = \top(v)$ (for~some~$i$), but no other node $(v,c)$ receives any flow from sideway edges.

	Thus, the total flow routed according to the previous description is
    
    \begin{align*}
    	\card{\setbuild{v}{c^* \notin \top_k(v)}}
    	+ \frac{1}{k} \cdot \card{\setbuild{v}{c^* \in \top_k(v)}}
    	&= n - \kapv(c^*) + \frac{1}{k} \cdot \kapv(c^*) \\
		&= n - \left(\frac{k-1}{k}\right) \cdot \kapv(c^*).  
    \end{align*}
    
	In order to describe how the remaining $\frac{k-1}{k} \cdot \kapv(c^*)$ units of flow are routed, we define $V^* = {\setbuild{v \in V}{w \succeq_{v} c^*}}$ to be the set of voters $v$ who prefer $w$ over $c^*$.
  	The remaining $\frac{k-1}{k} \cdot \kapv(c^*)$ units of flow originating at nodes $(v, w)$ are then distributed evenly, via sideways edges of~$w$, among nodes $(v^*, w)$ with $v^* \in V^*$; from the node $(v^*,w)$, the flow can then be routed directly to $(v^*,c^*)$, because $w \succeq_{v^*} c^*$.
  
	Importantly, $V^*$ is large enough: $\card{V^*} \ge \kapv{c^*} / k$ because otherwise, $\comp{V^*}$ $k$-blocks $w$ with witness set $\set{c^*}$.
	Hence, each node $(v^*, w)$ with $v^* \in V^*$ receives at most $k-1$ units of flow in this~last stage of the description of the~$(w, c^*)$-flow $f$. 
	
	Each voter $v$ sends out 1 unit of flow and receives at most $(k-1)+1=k$ units of flow via sideways edges in~$f$. 
	The~incoming flow contributes twice to the cost of $f$ at voter~$v$ since it also gets absorbed at $(v, c^*)$.
	Thus, we showed that $\cost(f) \leq 2k+1$. 
\end{proof}

\section{Tight Lower Bound on $\alpha$-Percentile Distortion}
\label{apx:tight_lower_bound}

\begin{rtheorem}{Theorem}{\ref{thm:perc:deterministic}}
  For every (deterministic) voting rule $\alg$, and constants $\alpha \in [1/2,1)$ and $\epsilon > 0$, 
  there~exists an~election~$\elec$ and candidate $w \in \alg(\elec)$ such that $\dist^\alpha_\elec(w) \geq 5 - \epsilon$.
\end{rtheorem}

\begin{proof}
	Let $m$ solve that $m - \floor{\alpha m} = 2$.
	Such an $m$ exists because the left-hand side is 0 for $m=0$, goes to infinity as $m \to \infty$, and changes by at most 1 going from $m$ to $m+1$.
	
	We construct an election $\elec$ with $m$ candidates and $n=m$ voters.\footnote{The construction can be easily generalized to $n > m$ by replacing each of the $m$ voters with a cluster of $n/m$ voters with identical preferences.}
	Let the candidates be denoted by $c_1, \ldots, c_m$.	
	The voters' preferences are the cyclic shifts of $c_1, c_2, \ldots, c_m$, i.e., for~each~$i$, voter $v_i$ has the ranking 
	$c_i \succ c_{i+1} \succ \cdots \succ c_m \succ c_1 \succ c_2 \succ \cdots \succ c_{i-1}.$
	Because the preferences are completely symmetric, we can assume without loss of generality that ${c_1 \in \alg(\elec)}$.
	
	We construct the following distances consistent with the given rankings, under which $c^* := c_m$ will be much better than $c_1$.
    Let $\delta = \epsilon / (5m)$.
	\begin{align*}
		d(v_1, c_j) & = 10 + \delta \cdot (j-1) \text{ for all } j \\
		d(v_2, c_1) & = 5 \\
		d(v_i, c_j) & = 1 + (j-1) \delta \text{ for all $i \geq 2$ and $j \geq i$} \\
		d(v_i, c_j) & = 3 + (j-1) \delta \text{ for all $i \geq 3$ and $j < i$}.
	\end{align*}
  
	In table form, these distances are as follows:  
	\[\begin{array}{c|c|c|c|c|c|c|c|}
		      & c_1 & c_2       & c_3        & c_4         & \cdots & c_{m-1}        & c_m \\ \hline
		 v_1  & 10  & 10+\delta & 10+2\delta & 10+3\delta & \cdots & 10+(m-2)\delta & 10+(m-1)\delta \\ \hline
		 v_2  & 5   & 1+\delta  & 1+2\delta  & 1+3\delta  & \cdots & 1+(m-2)\delta  & 1+(m-1)\delta \\ \hline 
		 v_3  & 3   & 3+\delta  & 1+2\delta  & 1+3\delta  & \cdots & 1+(m-2)\delta  & 1+(m-1)\delta \\ \hline 
		 v_4  & 3   & 3+\delta  & 3+2\delta  & 1+3\delta  & \cdots & 1+(m-2)\delta  & 1+(m-1)\delta \\ \hline 
		 \vdots & \multicolumn{7}{c|}{\ddots} \\ \hline
		 v_{m-1}  & 3   & 3+\delta  & 3+2\delta  & 3+3\delta  & \cdots & 1+(m-2)\delta  & 1+(m-1)\delta \\ \hline 
		 v_m  & 3   & 3+\delta  & 3+2\delta  & 3+3\delta  & \cdots & 3+(m-2)\delta  & 1+(m-1)\delta \\ \hline
	\end{array}\]
    The fact that the distances are consistent with the preferences in $\elec$ can be seen directly from the~tabular representation.
	Next, we need to verify the triangle inequality, i.e., for every two candidates $c_j, c_{j'}$ and voters $v_i, v_{i'}$, 
	we have that $d(v_i, c_j) \leq d(v_{i'},c_j) + d(v_{i'},c_{j'}) + d(v_i, c_{j'})$. We distinguish several cases:
  	\begin{itemize}
  	\item If $i=1$, then we use that $d(v_i,c_j) - d(v_i,c_{j'}) \leq (m-1) \delta < 1 \leq d(v_{i'},c_j)$.
	\item If $i=2,j=1$, then we use that $d(v_2,c_1) - d(v_{i'},c_1) \leq 2 \leq d(v_2,c_{j'}) + d(v_{i'},c_{j'})$.
	\item If $i \geq 3$ and $j < i$, then $d(v_i,c_j) - d(v_{i'},c_j) \leq (3+(j-1) \delta) - (1+(j-1) \delta) = {2 \leq d(v_i,c_{j'}) + d(v_{i'},c_{j'})}$.
	\item Finally, if $j \geq i > 1$, we use that $d(v_i, c_j) = 1+(j-1) \delta \leq 2 \leq d(v_{i'},c_j) + d(v_{i'},c_{j'})$.
	\end{itemize}
	
	Now, we compare the costs of $c_1$ and $c_m$.
	The choice of $m$ ensures that $\floor{\alpha m} + 1 = m-1$, i.e., both candidates are evaluated based on the second-worst voter.
	All voters except $v_1$ are at distance $1+(m-1) \delta$ from $c_m$, and thus, $\cost^{\alpha}_d(c_m) = 1+(m-1) \delta$.
	On the other hand, $v_1$ is at distance $10$ from $c_1$ and $v_2$ is at distance $5$, so $\cost^{\alpha}_d(v_1) \geq 5$.
	Thus, the $\alpha$-percentile distortion is at least
	\[ \frac{5}{1 + (m-1) \delta} \geq 5 - \epsilon.\]
\end{proof}

\end{document}